\newcommand{\inv}{\bm{i}}
\newcommand{\h}{\bm{h}}
\newcommand{\w}{\bm{w}}
\renewcommand{\o}{\bm{o}}
\newcommand{\eqdef}{\stackrel{\text{def}}{=}}
\newcommand{\leqaug}{\leq_{\text{aug}}}
\newcommand{\leqstruct}{\leq_{\text{st}}}
\newcommand{\geqaug}{\geq_{\text{aug}}}
\newcommand{\geqstruct}{\geq_{\text{st}}}
\newcommand{\refl}{{\hookrightarrow}}
\newcommand{\Bad}{{\textit{Bad}}}
\newcommand{\Dec}{{\textit{Dec}}}
\newcommand{\Inc}{{\textit{Ant}}}
\newcommand{\om}{{\omega}}
\newcommand{\omom}{{\om^{\om}}}
\newcommand{\Mul}{M^{\diamond}}
\newcommand{\Mulr}{M^{r}}
\newcommand{\set}[1]{\{\; #1 \;\}}
\newcommand{\setof}[2]{\{\; #1 \;:\; #2 \;\}}
\newcommand{\mset}[1]{\langle #1 \rangle}
\newcommand{\uperp}{\underline{\perp}}
\newcommand{\str}{\textit{str}}
\newcommand{\mot}{m.o.t. }
\newcommand{\pmot}{\operatorname{\underline{\bold{o}}}}
\definecolor{Gray}{rgb}{0.8,0.8,0.8}
\title{Ordinal measures of the set of finite multisets}
\author{Isa {Vialard}}{Université Paris-Saclay, CNRS, ENS Paris-Saclay, Laboratoire Méthodes Formelles, 91190, Gif-sur-Yvette France }{}{https://orcid.org/0000-0002-7261-9342}{}
\authorrunning{I. Vialard}
\keywords{Well-partial order, finite multisets, termination, program verification}
\theoremstyle{plain}
\theoremstyle{definition}
\begin{document}
\maketitle

\begin{abstract}
Well-partial orders, and the ordinal invariants used to measure them, are relevant in set theory, program verification, proof theory and many other areas of computer science and mathematics.
In this article we focus on one of the most common data structure in programming, the finite multiset of some wpo. There are two natural orders one can define on the set of finite multisets $M(X)$ of a partial order $X$: the multiset embedding and the multiset ordering, for which $M(X)$ remains a wpo when $X$ is. Though the maximal order type of these orders is already known, the other ordinal invariants remain mostly unknown.
Our main contributions are expressions to compute compositionally the width of the multiset embedding and the height of the multiset ordering. Furthermore, we provide a new ordinal invariant useful for characterizing the width of the multiset ordering.
\end{abstract}

\section*{Introduction}

\textbf{Measuring partial orders} intervene in many domains, from set theory to proof theory, including infinitary combinatorics, program verification, rewriting theory, proof automation and many more. 

There are intuitive notions of measure for a partial order when it is finite: its cardinal obviously, but also its height (the length of a maximal chain) or its width (the length of a maximal antichain).
Similar notions exist for infinite partial orders, as long as they are \emph{well partial orders} (wpo), i.e., well-founded partial orders with no infinite antichains \cite{higman52,kruskal60}. 
Two such notions are the \emph{height}, which is the order type of a maximal chain, and the \emph{maximal order type} (m.o.t.), which is the order type of a maximal linearisation, a notion introduced by De Jongh and Parikh in order to measure hierarchies of functions \cite{dejongh77}. These are transfinite measures, hence we call them \emph{ordinal invariants}.
K\v{r}\'i\v{z} and Thomas introduced equivalent characterizations for \mot and height, which naturally led to the definition of a third ordinal invariant, \emph{width} \cite{kriz90b}. Less studied than its counterparts, the width of a wpo relates to its antichains, even though it cannot be defined as the order type of a maximal antichain. While exploring techniques for program termination, Blass and Gurevich developed these characterizations into a game-theoretical point of view which gives useful intuitions \cite{blass2008}.

Ordinal invariants have proven helpful to establish program termination or to analyze well-structured transition systems (wsts) \cite{bonnet2013}, i.e., transition systems whose set of configurations is a wpo and whose transitions respect this ordering. The study of wsts through controlled bad sequences \cite{FFSS-lics2011}, and sometimes controlled antichains  \cite{schmitz2019b},  shows obvious connections between the complexity of  verification problems  on wsts and the \mot and width of the underlying wpo.

\textbf{Computing ordinal invariants compositionally} is paramount, as most wpos one might want to measure are built from classical operations on simpler wpos whose invariants are known.
De Jongh and Parikh computed the \mot of the disjoint sum and the cartesian product of wpos \cite{dejongh77}. Schmidt then computed the \mot of word embedding and homeomorphic tree embedding on a wpo \cite{schmidt2020}. Abraham and Bonnet pursued this line of study by computing the height of cartesian product, but also the width of disjoint sum and lexicographic product \cite{abbo99}.
For a complete survey of these results, see \cite{dzamonja2020}, where ~D{\v{z}}amonja et al. computed the ordinal invariants of the lexicographic product, but also the height of the multiset word and tree embeddings.

\textbf{Finite multisets:} In this article, we study the ordinal invariants of the set of finite multisets. Multisets are one of computer science's most common data structures, especially in rewriting theory. Informally, a finite multiset over a set $X$ is a finite subset of $X$ where an element can appear finitely many times. 
One can see the set of finite multisets on a wpo as the set of finite words quotiented by the equivalence relation ``equality up to some permutation''. It comes down to describing a multiset as a word where the order of terms is irrelevant.
A finite multiset can be represented by a function from $X$ to $\mathbb{N}$ with finite support, which associates each element with its multiplicity.

Two orderings are classically defined on the finite multisets of any ordered set.
The first one is the \emph{multiset ordering}. The intuition behind this ordering is that, from a multiset $m$, one can build a smaller multiset by removing an element $x$ from $m$, and replacing it with an arbitrary number of elements smaller than $x$. The multiset ordering often appears in rewriting theory and automation of termination proofs \cite{dershowitz79}. 

A perhaps more natural, but less known, ordering on the set of finite multisets is the \emph{multiset embedding}, 
or term ordering as it is called in \cite{vandermeeren2015}. It was presented by Aschenbrenner and Pong as a natural extension of the embedding order over finite words \cite{aschenbrenner2004}.

Some invariants of these two orderings have already been measured:
Van der Meeren, Rathjen, and Weiermann \cite{vandermeeren2015} built on \cite{weiermann91} to compute the m.o.t. of the set of finite multisets on a wpo $X$ ordered with the multiset ordering, and provided a new proof for the expression of the m.o.t of the multiset embedding computed in \cite{weiermann2009}. ~D{\v{z}}amonja et al. \cite{dzamonja2020} proved that the height of the multiset embedding is equal to the height of the set of finite words ordered with word embedding. It is noteworthy that these three results give expressions that are functional in the \mot and height of $X$.
However, the height of invariants of the multiset ordering still needs to be determined, and the width remains unstudied for both orderings.

\textbf{Our contributions:} In this article, we provide functional expressions for the width of the multiset embedding (\cref{thm-w-Mul}) and the height of the multiset ordering (\cref{thm-h-Mulr}).

We further show that the width of the multiset ordering is not functional in any of the three ordinal invariants (\cref{ex-w-Mulr-non-functional}). Nonetheless, we get around this issue by introducing a fourth ordinal invariant, the \emph{maximal safe order type} (\cref {def-pmot}), in which the width is functional (\cref{thm-w-Mulr}).

\cref{tab:contribution} sums up this article's contributions (in the gray cases) amidst the current state of the art.

\begin{table}[h]
\caption{Ordinal invariants of $\Mul(X)$ and $\Mulr(X)$.}
	\begin{center}
		\renewcommand{\arraystretch}{1.5}
		\begin{tabular}{r||cc}
			\hline
			\textbf{Invariants}	& $\Mul(X)$ 	& $\Mulr(X)$  	\\
			\hline
			M.o.t. $\o$   	& $\om^{\widehat{\o(X)}} $  & $\om^{\o(X)}$	\\
			Height $\h$     & $\h^*(X)$ 	& \cellcolor{Gray}$\om^{\h(X)}$	\\
			Width $\w$    	& \cellcolor{Gray}$\om^{\widehat{\o(X)}-1}$ & \cellcolor{Gray}$\om^{\pmot(X)}$	\\
			\hline
		\end{tabular}
	\end{center}
	\label{tab:contribution}
\end{table}

\section{Definitions and state of the art}

\subsection{Width, height and maximal order type}

We suppose that the reader is familiar with ordinal arithmetic. Otherwise, the notions used in this article can be found in the appendix.

A sequence $x_1,\dots,x_n,\dots$ on a partial order is \emph{good} when there exist $i<j$ such that $x_i\leq x_j$, otherwise it is a \emph{bad} sequence. A sequence is an \emph{antichain} when all its elements are pairwise incomparable (i.e. $x_i \perp x_j$ for any $i<j$).

A \emph{well partial order (wpo)} is a partial order that has no infinite bad sequences. Equivalently, a wpo is a partial order that is both well-founded (i.e. no infinite strictly decreasing sequences) and has no infinite antichains (i.e. sequences of pairwise incomparable elements).

Let $X$ be a wpo.
We define $\Bad(X)$, $\Dec(X)$ and $\Inc(X)$ as the trees of bad sequences, strictly decreasing sequences, and antichains of $X$, respectively, ordered by inverse prefix order (a sequence is smaller than its prefixes)
The finiteness of bad sequences, strictly decreasing sequences and antichains in a wpo implies that these trees are well-founded. Therefore, one can define a notion of rank: a sequence has rank $0$ when it cannot be extended; otherwise its rank is the smallest ordinal strictly larger than the ranks of smaller sequences (by inverse prefix order).

The \emph{maximal order type} (or m.o.t.) of $X$, denoted with $\o(X)$, is defined as the rank of the empty sequence in $\Bad(X)$.
Similarly, the \emph{height} $\h(X)$ and the \emph{width} $\w(X)$ of $X$ are defined as the ranks of the empty sequence in $\Dec(X)$ and $\Inc(X)$, respectively.
Together, $\o(X)$, $\h(X)$ and $\w(X)$ are called the \emph{ordinal invariants} of $X$.

Since the rank of the empty sequence is the smallest ordinal strictly larger than the ranks of the sequences of length $1$, these definitions can be reformulated inductively through the following \emph{residual equations}:
\begin{align}
\label{Reso}\tag{Res-o}\o(X)&=\sup_{x\in X} \left(\o(X_{\not\geq x})+1\right)\\
\label{Resh}\tag{Res-h}\h(X)&=\sup_{x\in X} \left(\h(X_{< x})+1\right)\\
\label{Resw}\tag{Res-w}\w(X)&=\sup_{x\in X} \left(\w(X_{\perp x})+1\right)
\end{align}

$X_{*x}\eqdef \setof{y\in X}{y * x}$ for $*\in\set{\not\geq,<,\perp}$ is what is called a residual of $X$.

Since $\Dec(X)$ and $\Inc(X)$ are embedded in $\Bad(X)$, we know $\h(X)\leq\o(X)$ and $\w(X)\leq\o(X)$. K\v{r}\'i\v{z} and Thomas found that the m.o.t can also be bounded by the width and the height \cite{kriz90b}:
\begin{theorem}[Height-Width Theorem]
	\label{kt}
	For any wpo $A$, \begin{itemize}
		\item $\o(A) \leq \h(A) \otimes \w(A)$.
		\item Furthermore, if $\o(A)$ is multiplicatively indecomposable (see Appendix) and $\h(A)<\o(A)$ (resp. $\w(A)<\o(A)$) then $\w(A)=\o(A)$ (resp. $\h(A)=\o(A)$).
	\end{itemize}
\end{theorem}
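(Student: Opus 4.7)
I would prove the first inequality $\o(A) \leq \h(A) \otimes \w(A)$ by transfinite induction on $\o(A)$, using the residual equations, and then derive the second assertion as a direct consequence of multiplicative indecomposability.

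For the inductive step, fix any $a \in A$. Set-theoretically $A_{\not\geq a}$ coincides with the disjoint union $A_{<a} \sqcup A_{\perp a}$, while the order induced from $A$ carries extra comparabilities between the two pieces. Since removing comparabilities can only increase the m.o.t.\ (any bad sequence in the richer order remains bad in the poorer one), one obtains
\[
\o(A_{\not\geq a}) \leq \o(A_{<a}) \oplus \o(A_{\perp a}).
\]
Both $\o(A_{<a})$ and $\o(A_{\perp a})$ are strictly below $\o(A)$, so the inductive hypothesis yields $\o(A_{<a}) \leq \h(A_{<a}) \otimes \w(A_{<a})$ and $\o(A_{\perp a}) \leq \h(A_{\perp a}) \otimes \w(A_{\perp a})$. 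The residual equations for $\h$ and $\w$ give $\h(A_{<a}) < \h(A)$ and $\w(A_{\perp a}) < \w(A)$, while monotonicity of the invariants under sub-wpos yields $\w(A_{<a}) \leq \w(A)$ and $\h(A_{\perp a}) \leq \h(A)$. One then combines these inequalities to conclude $\o(A_{\not\geq a}) + 1 \leq \h(A) \otimes \w(A)$ and takes the supremum over $a$.

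\textbf{The hard part} is the final arithmetic combination: the bound $\alpha \otimes \beta \oplus \gamma \otimes \delta + 1 \leq H \otimes W$ from only $\alpha < H$, $\delta < W$, $\beta \leq W$, $\gamma \leq H$ is actually false in general (e.g.\ $H = W = 2$ with $\alpha = \delta = 1$, $\beta = \gamma = 2$ gives $5 > 4$). The argument needs to exploit the residual structure of $A$ more carefully---for instance by selecting $a$ as a maximal element realizing the sup in the residual equation for $\h$ or $\w$, so that one of the two residuals has both height and width strictly smaller than those of $A$ and leaves the required slack. An alternative route is the game-theoretic viewpoint of Blass and Gurevich, extracting a winning strategy in the $\o$-game on $A$ from parallel winning strategies in the $\h$- and $\w$-games, which yields the natural product bound directly.

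\textbf{Second inequality.} Assume $\o(A)$ is multiplicatively indecomposable, meaning $\alpha \otimes \beta < \o(A)$ whenever $\alpha, \beta < \o(A)$, and suppose $\h(A) < \o(A)$. If we also had $\w(A) < \o(A)$, indecomposability would give $\h(A) \otimes \w(A) < \o(A)$, contradicting the first inequality. Combined with $\w(A) \leq \o(A)$ this forces $\w(A) = \o(A)$; the dual statement for $\h$ follows by symmetry.
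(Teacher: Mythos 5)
The paper offers no proof of this theorem---it is imported as a black box from K\v{r}\'i\v{z} and Thomas \cite{kriz90b}---so there is no in-paper argument to compare yours against; I can only judge the proposal on its own terms, and it has a genuine gap in the first (and main) assertion, which you acknowledge yourself. Your setup is sound: $A_{\not\geq a}$ is an augmentation of $A_{<a}\sqcup A_{\perp a}$, hence $\o(A_{\not\geq a})\leq\o(A_{<a})\oplus\o(A_{\perp a})$, and \cref{Resh} and \cref{Resw} give $\h(A_{<a})<\h(A)$ and $\w(A_{\perp a})<\w(A)$. But, as your own counterexample shows, these facts are arithmetically insufficient: strict monotonicity of $\otimes$ makes each of the two summands strictly less than $\h(A)\otimes\w(A)$, yet $\mu\oplus\nu+1\leq\pi$ for all $\mu,\nu<\pi$ only when $\pi$ is additively indecomposable, which $\h(A)\otimes\w(A)$ generally is not. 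Neither of your proposed repairs is carried out, and the first (picking $a$ to realize the supremum in \cref{Resh} or \cref{Resw}) does not obviously create the missing slack, since residuating at a single element never decreases both invariants of both pieces simultaneously. The argument in the literature is of a different, non-residual nature: it is an ordinal Erd\H{o}s--Szekeres argument. To each entry of a bad sequence one attaches the pair of ranks, in $\Dec(A)$ and in $\Inc(A)$, of the subsequences of earlier entries that dominate it, respectively are incomparable to it; since for $i<j$ one has $a_j<a_i$ or $a_j\perp a_i$, this reflects $\Bad(A)$ into a structure of rank $\h(A)\otimes\w(A)$. Blass and Gurevich's game formulation \cite{blass2008} packages the same idea. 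Some such global ingredient is unavoidable, so the first bullet should be treated as unproved in your write-up.

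Your derivation of the second assertion from the first is correct, with one small caveat: the paper defines multiplicative indecomposability by closure under the ordinary product $\cdot$, whereas you use closure under $\otimes$. The two coincide (for $\alpha=\om^{\om^\gamma}$ and $\beta,\delta<\alpha$ one has $\beta\otimes\delta<\om^{b\oplus d}<\alpha$ for suitable $b,d<\om^\gamma$, and the small cases are immediate), but that equivalence deserves a sentence rather than being folded into the word ``meaning''.
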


\subsection{Ordinal invariants of basic data structures}

In computer science, sums and products of wpos are the most basic data structures one can find. Their ordinal invariants are easy to compute compositionally, with the notable exception of the width of the cartesian product which is not functional \cite{arxiv2202.07487}. In this article, these structures will often be used as examples, or appear in proofs.

\begin{lemma}[Disjoint sum \cite{dejongh77,abbo99,dzamonja2020}]
\label{def-sqcup}
Let $P,Q$ be two wpos. The \emph{disjoint sum} $P\sqcup Q$ is the disjoint union of $P$ and $Q$ ordered such that for all $p \in P, q\in Q$, $p\perp q$.
It is a wpo and: \begin{itemize}
\item $o(P\sqcup Q) = \o(P)\oplus\o(Q)$,
\item $w(P\sqcup Q) = \w(P)\oplus\w(Q)$.
\item $h(P\sqcup Q) = \max(\h(P),\h(Q))$.
\end{itemize}
\end{lemma}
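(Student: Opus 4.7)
The plan is to derive all three equalities from the residual equations \eqref{Reso}, \eqref{Resh}, \eqref{Resw}, exploiting the single structural fact that no element of $P$ is comparable to any element of $Q$ inside $P \sqcup Q$.

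For the height, the argument is direct and needs no induction. If $z \in P$ then $(P \sqcup Q)_{<z} = P_{<z}$, since no $q \in Q$ can satisfy $q < z$; symmetrically for $z \in Q$. Substituting into \eqref{Resh}, the sup over $P \sqcup Q$ splits into a sup over $P$ and a sup over $Q$, giving $\h(P \sqcup Q) = \max(\h(P), \h(Q))$.

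For the width and the m.o.t.\ I would proceed by a simultaneous transfinite induction on the pair $(\o(P), \o(Q))$. The key observation is that for $z \in P$ we have $(P \sqcup Q)_{\perp z} = P_{\perp z} \sqcup Q$, since all of $Q$ is incomparable to $z$, and likewise $(P \sqcup Q)_{\not\geq z} = P_{\not\geq z} \sqcup Q$. The induction hypothesis gives $\w(P_{\perp z} \sqcup Q) = \w(P_{\perp z}) \oplus \w(Q)$ and analogously for $\o$. Feeding these into \eqref{Resw} and \eqref{Reso}, the main obstacle is an arithmetic step: one must show $\sup_{z \in P \sqcup Q}\bigl(\iota((P \sqcup Q)_{*z}) + 1\bigr) = \iota(P) \oplus \iota(Q)$ for $\iota \in \{\w,\o\}$. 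This reduces to the well-known recursive characterization $\alpha \oplus \beta = \sup\,\{(\alpha' \oplus \beta) + 1 : \alpha' < \alpha\} \cup \{(\alpha \oplus \beta') + 1 : \beta' < \beta\}$ of the Hessenberg sum, applied with $\alpha = \iota(P)$ and $\beta = \iota(Q)$.

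An attractive alternative that sidesteps the arithmetic lemma is to exhibit a rank-preserving bijection between $\Inc(P \sqcup Q)$ (respectively $\Bad(P \sqcup Q)$) and the tree of interleavings of $\Inc(P)$ with $\Inc(Q)$ (respectively $\Bad(P)$ with $\Bad(Q)$), using that a sequence in $P \sqcup Q$ is an antichain, respectively bad, iff its $P$-subsequence and its $Q$-subsequence both are; one then invokes the classical fact that the tree of interleavings of two well-founded trees has rank equal to the Hessenberg sum of the ranks of the input trees. Either route leads to $\w(P \sqcup Q) = \w(P) \oplus \w(Q)$ and $\o(P \sqcup Q) = \o(P) \oplus \o(Q)$.
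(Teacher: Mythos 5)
The paper states this lemma without proof, citing \cite{dejongh77,abbo99,dzamonja2020}; your first route (residual equations plus the recursive characterization of the natural sum as the least strict upper bound of $\{\alpha'\oplus\beta,\ \alpha\oplus\beta'\}$) is exactly the standard argument from those references, and it is correct: the key structural facts $(P\sqcup Q)_{<z}=P_{<z}$, $(P\sqcup Q)_{\perp z}=P_{\perp z}\sqcup Q$ and $(P\sqcup Q)_{\not\geq z}=P_{\not\geq z}\sqcup Q$ for $z\in P$ all hold, and your induction on $(\o(P),\o(Q))$ is well-founded for both invariants since $P_{\perp z}\subseteq P_{\not\geq z}$ gives $\o(P_{\perp z})<\o(P)$. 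The only step that genuinely carries weight is the arithmetic identity for $\oplus$, which you correctly isolate and which does require that the residual values $\iota(P_{*z})$ be merely cofinal in $\iota(P)$ rather than exhaust it --- monotonicity of $\oplus$ handles this, as you implicitly use.
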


\begin{lemma}[Lexicographic sum \cite{dzamonja2020}]
Let $P,Q$ be two wpos. The \emph{lexicographic sum} $P+Q$ is the disjoint union of $P$ and $Q$ ordered such that for all $p \in P, q\in Q$, $p\leq q$.
Ti is a wpo and: \begin{itemize}
\item $\o(P+ Q) = \o(P)+\o(Q)$,
\item $\w(P+ Q) = \max(\w(P),\w(Q))$,
\item $\h(P+Q)= \h(P)+\h(Q)$.
\end{itemize}
\end{lemma}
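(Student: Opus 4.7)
The approach is to apply the residual equations \eqref{Reso}, \eqref{Resh}, \eqref{Resw} to $P+Q$ and exploit the structural fact that every $p\in P$ lies weakly below every $q\in Q$. Well-partialness of $P+Q$ is immediate: a bad sequence cannot have a $P$-element at position $i$ followed later by a $Q$-element (the pair would be good), so every bad sequence splits as a bad $Q$-prefix followed by a bad $P$-suffix, both finite by hypothesis.

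For the width, any $p\in P$ and $q\in Q$ are comparable, so every antichain lives in a single component. Concretely, $(P+Q)_{\perp x}=P_{\perp x}$ when $x\in P$ and $(P+Q)_{\perp x}=Q_{\perp x}$ when $x\in Q$; plugging into \eqref{Resw} immediately gives $\w(P+Q)=\max(\w(P),\w(Q))$.

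For the height and the m.o.t.\ the residuals split into two kinds: if $x\in P$ then $(P+Q)_{<x}=P_{<x}$ and $(P+Q)_{\not\geq x}=P_{\not\geq x}$; if $x\in Q$ then $(P+Q)_{<x}=P+Q_{<x}$ and $(P+Q)_{\not\geq x}=P+Q_{\not\geq x}$, both lexicographic sums whose right component has strictly smaller m.o.t.\ than $Q$. An induction on $\o(Q)$ then lets me apply the inductive formula to these residuals and feed the result into \eqref{Reso}, yielding
\[
\o(P+Q)=\max\!\Bigl(\sup_{x\in P}\bigl(\o(P_{\not\geq x})+1\bigr),\;\sup_{x\in Q}\bigl(\o(P)+\o(Q_{\not\geq x})+1\bigr)\Bigr).
\]
The first supremum equals $\o(P)$ by \eqref{Reso} applied to $P$; for the second, the right-continuity of ordinal addition lets me factor $\o(P)$ out of the supremum, and \eqref{Reso} applied to $Q$ gives $\sup_{x\in Q}(\o(Q_{\not\geq x})+1)=\o(Q)$, so the second supremum equals $\o(P)+\o(Q)$, which dominates whenever $Q\neq\emptyset$. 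Hence $\o(P+Q)=\o(P)+\o(Q)$. The height computation is identical after replacing $\not\geq$ by $<$ everywhere and inducting on $\h(Q)$.

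The only genuinely delicate step is the commutation of the supremum with $\o(P)+\cdot$, which relies on the right-continuity of ordinal addition; everything else is routine bookkeeping on residuals, and the degenerate cases ($P$ or $Q$ empty) are trivial.
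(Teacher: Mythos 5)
The paper states this lemma as a known result from \cite{dzamonja2020} and gives no proof of its own, so there is no in-paper argument to compare against. Your proof via the residual equations is correct: the residuals of $P+Q$ split exactly as you describe (antichains and $\perp$-residuals stay within one component, while $\not\geq$- and $<$-residuals at a point of $Q$ carry all of $P$ along), and factoring $\o(P)$ out of the supremum is justified because ordinal addition is continuous in its right argument over nonempty suprema.
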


\begin{lemma}[Cartesian product \cite{dejongh77}]
Let $P,Q$ be two wpos. The \emph{cartesian product} $P\times Q$, where elements are compared component-wise,
is a wpo and: \begin{itemize}
\item$\o(P\times Q) = \o(P)\otimes\o(Q)$,
\item $\h(P\times Q) = \sup \setof{\alpha\oplus\beta+1}{\alpha<\h(P),\beta<\h(Q)}$.
\end{itemize}
\end{lemma}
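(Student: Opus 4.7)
My plan is to prove the three claims---$P\times Q$ being a wpo, the maximal order type formula, and the height formula---separately. For the wpo property I would apply the standard Ramsey-type extraction: any infinite sequence in a wpo has an infinite weakly increasing subsequence, so applying this fact first to the $P$-components and then to the $Q$-components of any sequence in $P\times Q$ yields an infinite weakly increasing (hence good) subsequence.

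For $\o(P\times Q)=\o(P)\otimes\o(Q)$ I would prove the two inequalities separately. For the lower bound, choose maximal linearisations $L_P$ and $L_Q$ of $P$ and $Q$ of order types $\o(P)$ and $\o(Q)$, and observe that the cartesian product $L_P\times L_Q$ admits a linear extension of order type $\o(P)\otimes\o(Q)$ (the Hessenberg product), which is in particular a linear extension of $P\times Q$; this yields $\o(P\times Q)\geq\o(P)\otimes\o(Q)$. For the upper bound I would use \eqref{Reso} together with transfinite induction on $\o(P)\oplus\o(Q)$: for each $(p,q)\in P\times Q$, decompose
\[(P\times Q)_{\not\geq(p,q)}=(P_{\not\geq p}\times Q)\cup(P\times Q_{\not\geq q}),\]
apply a union bound of the form $\o(A\cup B)\leq\o(A)\oplus\o(B)$ for sub-wpos, invoke the induction hypothesis on each factor, and close with the arithmetic identities $\o(P_{\not\geq p})+1\leq\o(P)$, $\o(Q_{\not\geq q})+1\leq\o(Q)$, and the distributivity of $\otimes$ over $\oplus$.

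For the height formula, the lower bound is combinatorial. Given $\alpha<\h(P)$ and $\beta<\h(Q)$, pick $p$ and $q$ with $\h(P_{<p})\geq\alpha$ and $\h(Q_{<q})\geq\beta$, and interleave a strictly decreasing chain in $P_{<p}$ of rank $\alpha$ with a strictly decreasing chain in $Q_{<q}$ of rank $\beta$ to produce a strictly decreasing chain in $P\times Q$ starting from $(p,q)$ whose rank in $\Dec(P\times Q)$ is $\alpha\oplus\beta+1$; the natural sum arises precisely from the combinatorial freedom in ordering the interleaving moves. For the upper bound I would invoke \eqref{Resh} at an arbitrary $(p,q)$, observe that $(P\times Q)_{<(p,q)}=P_{\leq p}\times Q_{\leq q}\setminus\{(p,q)\}$, and use an auxiliary lemma---that removing the maximum of a wpo decreases the height by exactly one---together with the induction hypothesis, to conclude $\h((P\times Q)_{<(p,q)})=\h(P_{<p})\oplus\h(Q_{<q})$; taking the supremum over $(p,q)$ yields the desired formula, since $\{\h(P_{<p})+1:p\in P\}$ is cofinal in $\h(P)$ and symmetrically for $Q$.

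The hard part will be arranging the induction correctly for the height upper bound: since $P_{\leq p}$ may coincide with $P$ when $p$ is maximal, a naive induction on $\h(P)\oplus\h(Q)$ does not strictly decrease. I would therefore frame the induction on the pair $(\h(P_{<p}),\h(Q_{<q}))$, which is strictly below $(\h(P),\h(Q))$ by \eqref{Resh}, and phrase the inductive statement as a bound on $\h((P\times Q)_{<(p,q)})$ from the outset. The auxiliary ``removing-maximum'' lemma needs to be verified once by a short direct computation using \eqref{Resh}.
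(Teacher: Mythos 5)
The paper states this lemma without proof (it is imported from De Jongh--Parikh and Abraham--Bonnet), so your proposal can only be judged on its own merits. Your treatment of the wpo property, of the m.o.t.\ lower bound (modulo the classical fact that the product of two ordinals admits a linearisation of type $\alpha\otimes\beta$, which is itself the nontrivial core of that direction and deserves its own induction on Cantor normal forms), and of the height formula in both directions is sound; your repair of the induction for the height upper bound via $(\h(P_{<p}),\h(Q_{<q}))$ is exactly the right move.

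There is, however, a genuine gap in the upper bound $\o(P\times Q)\leq\o(P)\otimes\o(Q)$. Your plan is to bound $\o\bigl((P\times Q)_{\not\geq(p,q)}\bigr)$ by $\bigl(\o(P_{\not\geq p})\otimes\o(Q)\bigr)\oplus\bigl(\o(P)\otimes\o(Q_{\not\geq q})\bigr)$ via the union bound and the induction hypothesis, and then to ``close with arithmetic identities.'' But the required arithmetic fact --- that $(\alpha'\otimes\beta)\oplus(\alpha\otimes\beta')<\alpha\otimes\beta$ whenever $\alpha'<\alpha$ and $\beta'<\beta$ --- is false. Take $P=Q=2$ (two-element chains) and $(p,q)$ the top element: then $\alpha'=\beta'=1$, $\alpha=\beta=2$, and $(1\otimes 2)\oplus(2\otimes 1)=4=2\otimes 2$, so \eqref{Reso} only yields $\o(P\times Q)\leq 5$ while the truth is $4$. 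The union bound is lossy precisely because the two pieces overlap in $P_{\not\geq p}\times Q_{\not\geq q}$, and m.o.t.\ does not satisfy any inclusion--exclusion that would recover the loss. The classical proof avoids this by first establishing the De Jongh--Parikh splitting lemma (if $\o(P)=\alpha_1\oplus\alpha_2$ then $P=P_1\cup P_2$ with $\o(P_i)\leq\alpha_i$), using it together with distributivity of $\otimes$ over $\oplus$ to reduce to the case where $\o(P)$ and $\o(Q)$ are both additively indecomposable; only in that case does your residual argument close, since then $\alpha\otimes\beta$ is itself additively indecomposable and both $\alpha'\otimes\beta$ and $\alpha\otimes\beta'$ are strictly below it. You need to add this reduction (or some equivalent device) for the upper bound to go through.
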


\begin{lemma}[Lexicographic product \cite{dzamonja2020}]
\label{prod-lex}
Let $P,Q$ be two wpos. The \emph{lexicographic product} of $P$ along $Q$, written $P\cdot Q$, has the same support as $P\times Q$, and but is ordered differently:
$$ (p,q)\leq_{P\cdot Q} (p',q') \quad\text{ iff }\quad q<_Q q' \text{ , or } q=q' \text{ and } p \leq_P p'\;.$$
It is a wpo, and: \begin{itemize}
\item $\o(P\cdot Q)=\o(P)\cdot\o(Q)$,
\item $\w(P\cdot Q)=\w(P)\odot \w(Q)$.
\end{itemize}
\end{lemma}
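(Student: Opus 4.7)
The lemma splits naturally into three claims: that $P\cdot Q$ is a wpo, that $\o(P\cdot Q)=\o(P)\cdot\o(Q)$, and that $\w(P\cdot Q)=\w(P)\odot\w(Q)$. I would address them in that order.

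To see that $P\cdot Q$ is a wpo, I would run a standard Ramsey-style extraction on an arbitrary infinite sequence $((p_n,q_n))_n$. Using that $Q$ is a wpo, I extract an infinite weakly increasing subsequence of the $q_n$'s. If any inequality in this subsequence is strict, the corresponding pair of points is already good in $P\cdot Q$; otherwise the $q$-coordinate is constant on an infinite sub-subsequence, and the wpo property of $P$ supplies a good pair among the matching $p$-coordinates.

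For $\o(P\cdot Q)=\o(P)\cdot\o(Q)$ I would handle the two inequalities separately. The lower bound follows by picking maximal linearisations $L_P$ and $L_Q$ of $P$ and $Q$: their lexicographic product is a linearisation of $P\cdot Q$ of order type exactly $\o(P)\cdot\o(Q)$. For the upper bound I would apply the residual equation~\eqref{Reso} by transfinite induction on $\o(Q)$. The residual $(P\cdot Q)_{\not\geq(p,q)}$ splits into pairs whose $Q$-coordinate is not $\geq q$ and pairs whose $Q$-coordinate equals $q$ but whose $P$-coordinate is not $\geq p$. After bookkeeping this gives $\o((P\cdot Q)_{\not\geq(p,q)})\leq \o(P)\cdot\o(Q_{\not\geq q})+\o(P_{\not\geq p})$, and taking the supremum over $(p,q)$ reconstructs $\o(P)\cdot\o(Q)$ by standard ordinal identities.

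The width identity follows the same residual-induction template but using~\eqref{Resw}. The structural input is that any antichain in $P\cdot Q$ stratifies along its $Q$-coordinate: the distinct $Q$-values occurring in it form an antichain of $Q$, and at each such value the set of $P$-coordinates that appear forms an antichain of $P$. This two-level decomposition is exactly what the operation $\odot$ is built to capture, so pushing it through the residual recursion produces the desired closed form. I expect the main obstacle in both upper bound arguments to be the passage to limit ordinals: turning the per-element inequalities supplied by the residual equations into the clean closed forms $\o(P)\cdot\o(Q)$ and $\w(P)\odot\w(Q)$ requires justifying the exchange of the outer supremum with the multiplicative operation, which is where most of the ordinal-arithmetic care is concentrated. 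The wpo verification and the two lower bounds, by contrast, are comparatively routine.
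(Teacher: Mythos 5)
This lemma is quoted from the literature (\cite{abbo99,dzamonja2020}); the paper gives no proof of its own, so your proposal can only be judged on its own merits. The wpo argument and the m.o.t.\ computation are sound: the Ramsey extraction is standard, the lexicographic product of maximal linearisations does linearise $P\cdot Q$ with order type $\o(P)\cdot\o(Q)$, and the residual bound $\o((P\cdot Q)_{\not\geq(p,q)})\leq \o(P)\cdot\o(Q_{\not\geq q})+\o(P_{\not\geq p})$ is correct (the two pieces of the residual are never ordered with the $\{q\}$-layer below the $Q_{\not\geq q}$-layer, so ordinary sum suffices), and it collapses to the right supremum.

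The width part is where your plan has a genuine gap. First, the observation that an antichain of $P\cdot Q$ stratifies into an antichain of $Q$-values with an antichain of $P$ above each value is true but does not yield either inequality: as the paper stresses, width is \emph{not} the order type of a maximal antichain, so neither the upper bound nor the lower bound can be read off from the shape of antichains, and in particular the lower bound is not ``comparatively routine'' in the way the m.o.t.\ lower bound is --- there is no explicit witness to exhibit. The repair is to compute the $\perp$-residual exactly rather than merely bounding it: one checks that $(p',q')\perp(p,q)$ holds iff either $q'\perp q$, or $q'=q$ and $p'\perp p$, and that these two families of elements are mutually incomparable, so that
$(P\cdot Q)_{\perp(p,q)}\equiv (P\cdot Q_{\perp q})\sqcup P_{\perp p}$.
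Feeding this isomorphism into \eqref{Resw} together with \cref{def-sqcup} gives
$\w(P\cdot Q)=\sup_{p,q}\bigl((\w(P\cdot Q_{\perp q})\oplus\w(P_{\perp p}))+1\bigr)$, and a single transfinite induction then yields both inequalities at once, using the defining recursion of $\odot$ (\cref{def-odot}) to identify the supremum with $\w(P)\odot\w(Q)$. You should make this exact decomposition explicit; without it the width claim is asserted rather than proved.
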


See the appendix for definitions of the natural sum $\oplus$ and product $\otimes$, as well as for the lesser-known Hessenberg-based product $\odot$.

\subsection{Comparing wpos}

A widely-used and intuitive relation between wpos is the \emph{reflection} relation. A mapping between wpos
	$f \colon (A, \leq_A) \rightarrow (B, \leq_B)$
	is a \emph{reflection} if $f(x)\leq_B f(y)$ implies $x \leq_A y$, i.e. it is a morphism
	from $(A, \not\leq_A)$ to $(B, \not\leq_B)$
	When there is a reflection from $A$ to $B$, we note $A \refl B$. 
	
However, in this article, we prefer to use the stronger notions of augmentations and substructures.

\begin{definition}[Substructure, augmentation]
	A wpo $(A, \leq_A)$ is a \emph{substructure} of a wpo $(B, \leq_B)$
	whenever $A \subseteq B$ and $\leq_A$ is the restriction of $\leq_B$
	to $A$. This relation is written $A \leqstruct B$.
	Similarly $(A,\leq_A)$ is an \emph{augmentation} of $(B,\leq_B)$
	whenever $A = B$ and ${\leq_B} \subseteq {\leq_A}$.
	We write this relation $A \geqaug B$.

\end{definition}
Obviously, $A \leqstruct B$ or $A \geqaug B$ imply $A \refl B$.

We denote with $A\equiv B$ that $(A, \leq_A)$ is isomorphic to $(B, \leq_B)$.

We often abuse these notations and write $A \leqstruct B$ (resp. $B \leqaug A$) to mean that $A$ is isomorphic to a substructure (resp. an augmentation) of $B$.

\begin{example}
It is well-known that the height and the \mot of a wpo $X$ are reached by the order type of a maximal chain and a maximal linearisation of $X$, respectively. Therefore $\h(X)\leqstruct X$ and $\o(X)\geqaug X$.
\end{example}

In this article, when we consider a subset $Y$ of a wpo $X$, it is understood that $Y\leqstruct X$, i.e. $Y$ is ordered with $\leq_X$ restricted to the subset.

These notions allow us to compare the ordinal invariants of wpos.
\begin{lemma}
Let $A$ and $B$ be wpos.

If $A \leqstruct B$ then $\inv(A)\leq\inv(B)$ for $\inv\in\set{\o,\h,\w}$.

If $A \geqaug B$ then $\o(A)\leq\o(B)$ and $\w(A)\leq\w(B)$. However $\h(A)\geq\h(B)$.
\end{lemma}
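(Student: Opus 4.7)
All three invariants are defined as ranks of the empty sequence in well-founded trees ($\Bad$, $\Dec$, $\Inc$) ordered by inverse prefix, so I reduce everything to a single auxiliary principle: if $T_1$ and $T_2$ are two such trees on the same root, and every child of a node $s$ in $T_1$ is also a child of $s$ in $T_2$, then $\operatorname{rank}_{T_1}(s)\leq \operatorname{rank}_{T_2}(s)$ for every node $s$ of $T_1$. This is an immediate transfinite induction: a $T_1$-leaf has rank $0$, and in the inductive step the supremum defining the $T_1$-rank of $s$ is taken over a subset (after rank-incrementing) of what defines the $T_2$-rank of $s$.

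\textbf{Substructure case} ($A \leqstruct B$). Since $A \subseteq B$ and $\leq_A$ is the restriction of $\leq_B$ to $A$, the conditions defining a bad, a strictly decreasing, or an antichain sequence are unchanged when a sequence over $A$ is viewed inside $B$. Hence $\Bad(A)$, $\Dec(A)$ and $\Inc(A)$ embed as subtrees of $\Bad(B)$, $\Dec(B)$ and $\Inc(B)$ respectively, and the auxiliary principle applied at the empty sequence yields $\inv(A)\leq\inv(B)$ for every $\inv\in\set{\o,\h,\w}$.

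\textbf{Augmentation case} ($A \geqaug B$, so the same carrier with $\leq_B \subseteq \leq_A$). The key is to track how each type of sequence transfers under inclusion of order relations. A bad sequence in $A$ requires $x_i\not\leq_A x_j$, which by $\leq_B\subseteq \leq_A$ implies $x_i\not\leq_B x_j$, so it is bad in $B$; likewise incomparability in $A$ is stronger than in $B$, so an antichain in $A$ is an antichain in $B$. Therefore $\Bad(A)$ and $\Inc(A)$ embed as subtrees into $\Bad(B)$ and $\Inc(B)$, yielding $\o(A)\leq \o(B)$ and $\w(A)\leq \w(B)$ via the auxiliary principle.

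For the height the direction reverses: a strict decrease $x_{i+1}<_B x_i$ unfolds to $x_{i+1}\leq_B x_i$ with $x_{i+1}\neq x_i$, which via $\leq_B\subseteq\leq_A$ upgrades to $x_{i+1}<_A x_i$. Hence $\Dec(B)$ embeds as a subtree into $\Dec(A)$, and the auxiliary principle gives $\h(B)\leq \h(A)$. The only subtle point of the proof is noticing this reversal; everything else is a routine unfolding of the definitions, so I do not expect any real obstacle.
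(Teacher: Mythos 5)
Your proof is correct: the paper states this lemma without proof (treating it as folklore), and your argument — reducing everything to the rank-monotonicity of subtree inclusions among $\Bad$, $\Dec$, $\Inc$, and carefully noting that augmentation sends $\Dec(B)$ into $\Dec(A)$ rather than the other way — is exactly the standard argument the paper implicitly relies on. No gaps; the reversal for height is the only delicate point and you handle it correctly.
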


The substructure and augmentation relations are monotonous through most operations on wpos.

\subsection{Orderings on the set of finite multiset}

We note $\mset{x_1,\dots, x_n}$ the finite multiset that contains the elements $x_1,\dots,x_n$. The symbols $\cup,\cap$ and $\setminus$ denote the union, the intersection and the subtraction, respectively. For any $n\in\mathbb{N}$, $m\times n$ means the union of $n$ copy of $m$.
For instance, let $m=\mset{1,1,1,2,2,3}$ and $m'=\mset{1,1,2,4}$.  Then $m\cup m'= \mset{1,1,1,1,1,2,2,2,3,4}$, and $m\cap m'= \mset{1,1,2}$. Furthermore, $m\setminus (m\cap m')=\mset{1,2,3}$, and $m'\times 2 = \mset{1,1,1,1,2,2,4,4}$

We denote with $|m|$ the number of elements of a multiset $m$.

There are two main orderings classically defined on the set of finite multisets $M(X)$ of a partial order $X$:

\begin{definition}[multiset embedding]
$\Mul(X)=(M(X),\leq_{\diamond})$ is ordered with the \emph{multiset embedding}, also known as the term ordering:

$m \leq_{\diamond} m'$  iff there exists $f:m\rightarrow m'$ injective such that for any $x\in m$,  $x\leq f(x)$.
\end{definition}
\begin{definition}[Multiset ordering]
$\Mulr(X)=(M(X),\leq_r)$ is ordered with the \emph{multiset ordering}:
$$m \leq_{r} m' \iff m=m' \text{ or } \forall x \in m\setminus(m\cap m'),\exists y \in m'\setminus(m\cap m'), x<y\;.$$
\end{definition}

The multiset ordering and the multiset embedding are both augmentations of the word embedding on $X^*$. Therefore, according to Higman's lemma \cite{higman52}, $\Mul(X)$ and $\Mulr(X)$ are wpos when $X$ is.

The main differences between these two ordering are that the multiset embedding gives a lot of importance to the size of a multiset (for any multisets $m,m'$, $m\leq_{\diamond} m'$ implies $|m| \leq |m'|$), whereas with the multiset ordering one element can dominate many. For instance, $\mset{1}\geq_r \mset{0}\cdot 100$ in $\Mulr(\mathbb{N})$.   
This is why for any wpo $X$, $\Mul(X)\leqaug \Mulr(X)$, as was observed by Aschenbrenner and Pong \cite{aschenbrenner2004}.

Note that if $X$ is a linear ordering, then $\Mulr(X)$ remains linear, while $\Mul(X)$ does not.

Nonetheless, $\leq_{\diamond}$ and $\leq_r$ behave similarly on simple data structures, as we will see with the 
following \emph{transformation lemmas}:

\begin{lemma}[Disjoint sum transformation]
\label{lem-mul-sqcup}
For any wpos $A$ and $B$,
$M^*(A\sqcup B) \equiv M^*(A)\times M^*(B)$  for $*\in\{\diamond, r\}$.
\end{lemma}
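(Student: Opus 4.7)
The plan is to introduce the natural splitting map $\phi: M(A \sqcup B) \to M(A) \times M(B)$ defined by $\phi(m) = (m_A, m_B)$, where $m_A$ (resp.\ $m_B$) is the submultiset of $m$ consisting of those elements that lie in $A$ (resp.\ in $B$). Since every element of $A \sqcup B$ lies in exactly one of $A$ or $B$, we have $m = m_A \cup m_B$ as multisets, so $\phi$ is a bijection of underlying sets. I then need to check that for each $* \in \{\diamond, r\}$ the relation $m \leq_* m'$ holds in $M^*(A \sqcup B)$ if and only if both $m_A \leq_* m'_A$ in $M^*(A)$ and $m_B \leq_* m'_B$ in $M^*(B)$. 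The crucial fact used throughout is that in $A \sqcup B$ any $a \in A$ and $b \in B$ are incomparable, so no strict or non-strict inequality can cross the partition.

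For the multiset embedding, if $m \leq_\diamond m'$ via an injection $f: m \to m'$ satisfying $x \leq f(x)$, then for $x \in m_A$ the relation $f(x) \geq x$ forces $f(x) \in A$; hence $f$ restricts to injections $m_A \to m'_A$ and $m_B \to m'_B$, which witness the two component-wise inequalities. Conversely, given such component-wise injections, their union is well-defined and injective (disjoint domains and disjoint codomains) and witnesses $m \leq_\diamond m'$.

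For the multiset ordering, the first step is to verify that multiset intersection and difference respect the partition, so $m \cap m' = (m_A \cap m'_A) \cup (m_B \cap m'_B)$ and $m \setminus (m \cap m') = (m_A \setminus (m_A \cap m'_A)) \cup (m_B \setminus (m_B \cap m'_B))$. Given $m \leq_r m'$ with $m \neq m'$, any $x \in m_A \setminus (m_A \cap m'_A)$ has some strict upper bound $y \in m' \setminus (m \cap m')$, and since $x \in A$ forces $y \in A$ we get $y \in m'_A \setminus (m_A \cap m'_A)$; this yields $m_A \leq_r m'_A$, and symmetrically $m_B \leq_r m'_B$. Conversely, given both component-wise inequalities, each element of $m \setminus (m \cap m')$ is dominated by an element in the corresponding half of $m' \setminus (m \cap m')$, establishing $m \leq_r m'$.

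I do not expect a genuinely hard step here: the main obstacle is purely bookkeeping, namely verifying the partition decompositions of $m \cap m'$ and $m \setminus (m \cap m')$ cleanly and handling the degenerate equality cases in the definition of $\leq_r$ (where either $m_A = m'_A$ or $m_B = m'_B$ or both). The whole argument is driven by the single structural fact that elements of $A$ and of $B$ are incomparable in $A \sqcup B$, which forces every comparison in $M^*(A \sqcup B)$ to split along the partition.
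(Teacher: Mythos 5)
Your proof is correct; the paper actually states \cref{lem-mul-sqcup} without proof, and your argument — the splitting bijection $m \mapsto (m_A, m_B)$ together with the observation that incomparability of $A$ and $B$ forces every witness (the injection $f$ for $\leq_{\diamond}$, the dominating element $y$ for $\leq_r$) to stay on its own side of the partition — is exactly the routine verification the paper leaves implicit. The bookkeeping on $m \cap m'$ and $m \setminus (m \cap m')$ and the degenerate equality cases are handled correctly.
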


\begin{lemma}[Direct sum transformation, multiset ordering]
\label{lem-mulr-plus}
For any wpos $A$ and $B$,
$\Mulr(A+B)\equiv \Mulr(A)\cdot \Mulr(B)$
\end{lemma}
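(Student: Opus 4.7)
The plan is to exhibit the natural bijection
$\phi \colon M(A+B) \to M(A)\times M(B)$ that sends a multiset $m$ on $A+B$ to the pair $(m_A, m_B)$, where $m_A \eqdef m\cap M(A)$ and $m_B \eqdef m\cap M(B)$ are the sub-multisets consisting of the $A$- and $B$-elements of $m$. This is clearly a bijection of the underlying sets, so the content of the proof is to show that $\phi$ is simultaneously order-preserving and order-reflecting from $\Mulr(A+B)$ to $\Mulr(A)\cdot\Mulr(B)$.

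Before the case analysis, I would record the elementary observation that the decomposition is compatible with the boolean operations on multisets: $m\cap m'=(m_A\cap m'_A)\cup(m_B\cap m'_B)$, and therefore
$m\setminus(m\cap m')=\bigl(m_A\setminus(m_A\cap m'_A)\bigr)\,\cup\,\bigl(m_B\setminus(m_B\cap m'_B)\bigr)$,
the union being disjoint because $A$ and $B$ are disjoint. In particular, whenever $m_B=m'_B$ the $B$-part of the difference is empty.

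For the preservation direction, assume $m\leq_r m'$ and split on whether $m_B=m'_B$. If $m_B=m'_B$, the differences live entirely in $A$ and the domination condition defining $\leq_r$ on $\Mulr(A+B)$ is exactly the condition $m_A\leq_r m'_A$ in $\Mulr(A)$; by the definition of $\cdot$ this gives $\phi(m)\leq\phi(m')$ in the lexicographic product. If $m_B\neq m'_B$, pick any $x\in m_B\setminus(m_B\cap m'_B)$ and apply $m\leq_r m'$ to produce $y\in m'\setminus(m\cap m')$ with $x<y$; since $x\in B$ and every element of $A$ is strictly below every element of $B$ in $A+B$, we must have $y\in B$, hence $y\in m'_B\setminus(m_B\cap m'_B)$. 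This shows $m_B<_r m'_B$, which forces the strict lexicographic inequality $\phi(m)<\phi(m')$ regardless of the $A$-coordinate.

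For the reflection direction, assume $\phi(m)\leq\phi(m')$. In the equality case the result is immediate. If $m_B=m'_B$ and $m_A\leq_r m'_A$, the domination witnesses in $\Mulr(A)$ also witness $m\leq_r m'$ in $\Mulr(A+B)$. If $m_B<_r m'_B$, the key small lemma is that $m'_B\setminus(m_B\cap m'_B)\neq\emptyset$: otherwise $m'_B\subseteq m_B$, and then $m_B\leq_r m'_B$ would require every element of $m_B\setminus m'_B$ to be strictly dominated in the empty set, forcing $m_B=m'_B$, a contradiction. Once this is established, any element of $m_A\setminus(m_A\cap m'_A)$ is dominated by any element of $m'_B\setminus(m_B\cap m'_B)$, and any element of $m_B\setminus(m_B\cap m'_B)$ is dominated using the witnesses of $m_B<_r m'_B$, so $m\leq_r m'$ in $\Mulr(A+B)$.

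The only step that requires a moment of care is the small lemma $m'_B\setminus(m_B\cap m'_B)\neq\emptyset$ under $m_B<_r m'_B$, and the symmetric observation that any dominator of a $B$-element must itself lie in $B$; everything else is a routine unfolding of the definitions of $\leq_r$ and of the lexicographic product.
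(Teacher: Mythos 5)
Your proof is correct. The paper in fact states this lemma without proof, so there is nothing to compare against directly; but your decomposition $m\mapsto(m_A,m_B)$ is exactly the one the paper uses in its proof of the adjacent embedding version (\cref{lem-mul-plus}), and your two added observations --- that a dominator of a $B$-element must lie in $B$, and that $m_B<_r m'_B$ forces $m'_B\setminus(m_B\cap m'_B)\neq\emptyset$ --- are precisely the points needed to upgrade the augmentation statement to an isomorphism for the multiset ordering.
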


\begin{lemma}[Direct sum transformation, multiset embedding]
\label{lem-mul-plus}
For any wpos $A$ and $B$,
$\Mul(A+B)\leqaug \Mul(A)\cdot \Mul(B)$
\end{lemma}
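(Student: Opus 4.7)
The plan is to exhibit an order-preserving bijection $\phi \colon \Mul(A+B) \to \Mul(A) \cdot \Mul(B)$ defined by $\phi(m) \eqdef (m_A, m_B)$, where $m_A$ (resp.\ $m_B$) is the submultiset of $m$ consisting of its elements lying in $A$ (resp.\ $B$). Every finite multiset over $A+B$ decomposes uniquely this way, so $\phi$ is a bijection on supports; the augmentation claim then reduces to showing that $m \leq_\diamond m'$ in $\Mul(A+B)$ forces $(m_A, m_B) \leq (m'_A, m'_B)$ in the lexicographic product, with $\Mul(B)$ the dominant coordinate.

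The key structural observation concerns any injection $f \colon m \to m'$ witnessing $m \leq_\diamond m'$: since $f(x) \geq_{A+B} x$ for every $x \in m$, and in the lexicographic sum any element above some $b \in B$ itself lies in $B$, the restriction $g \eqdef f|_{m_B}$ is a well-defined injection $m_B \to m'_B$ preserving $\leq_B$. Hence $g$ witnesses $m_B \leq_\diamond m'_B$. Note the asymmetry inherent to $+$: elements of $m_B$ are forbidden from landing in $m'_A$, whereas elements of $m_A$ may be freely promoted into $m'_B$.

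I then split on whether any element of $m_A$ is indeed promoted by $f$ into $m'_B$. If no such promotion occurs, then $f|_{m_A}$ is an injection $m_A \to m'_A$ witnessing $m_A \leq_\diamond m'_A$, and combined with $m_B \leq_\diamond m'_B$ this yields the lex comparison whether $m_B = m'_B$ (use the $A$-coordinate) or $m_B <_\diamond m'_B$ (done by the $B$-coordinate). If some promotion does occur, then by injectivity of $f$ the sets $g(m_B)$ and $f(m_A) \cap m'_B$ are disjoint subsets of $m'_B$ and the latter is nonempty, giving $|m_B| < |m'_B|$; combined with $m_B \leq_\diamond m'_B$ this upgrades to $m_B <_\diamond m'_B$ strictly, which alone suffices for the lex comparison regardless of the $A$-coordinate.

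The main subtlety is precisely this freedom of $f$ to mix $A$-elements into the $B$-part of $m'$, which is what prevents the comparison from lifting to a full isomorphism and explains why only $\leqaug$ can be expected here, in contrast with \cref{lem-mulr-plus}; the payoff of the argument is that this mixing automatically consumes strict inequality on the dominant coordinate, so the case split closes cleanly without any further bookkeeping on multiplicities.
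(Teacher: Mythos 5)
Your proof is correct and follows essentially the same route as the paper's: decompose $m = m_A \cup m_B$, observe that the witnessing injection must send $m_B$ into $m'_B$, and conclude via the lexicographic order with $\Mul(B)$ as the dominant coordinate. The only difference is cosmetic — the paper splits on whether $m_B <_{\diamond} m'_B$ or $m_B = m'_B$, whereas you split on whether $f$ promotes some element of $m_A$ into $m'_B$; the two case analyses are logically equivalent (your promotion case is exactly what forces the strict inequality in the paper's first case, and your no-promotion case is what the paper derives implicitly from injectivity when $m_B = m'_B$).
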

\begin{proof}
For all $m\in \Mul(A+B)$, $m=m_A\cup m_B$ for some $m_A\in \Mul(A)$, $m_B\in \Mul(B)$. 
Let $m=m_A\cup m_B, m' = m'_A\cup m'_B$ such that $m  \leq_{\Mul(A+B)}  m'$. By definition there exists an injective function $f: m \rightarrow m'$ such that $x \leq_{A+B} f(x)$ for any $x\in m$. Then $f(m_B)\subseteq m'_B$ hence $m_B \leq_{\Mul(B)} m'_B$. If 
$m_B <_{\Mul(B)} m'_B$ then $(m_A,m_B)<_{\Mul(A)\cdot \Mul(B)}(m'_A,m'_B)$. Otherwise $m_B = m'_B$. Then $f(m_A)\subseteq m'_A$, hence $m_A \leq_{\Mul(A)} m'(A)$. Thus $(m_A,m_B)\leq_{\Mul(A)\cdot \Mul(B)}(m'_A,m'_B)$.
\end{proof}

The augmentation and substructure relations are monotone with respect to the multiset ordering and multiset embedding:
\begin{lemma}
Let $A,B$ be two wpos. Then $A\leqstruct B$ implies $\Mul(A)\leqstruct \Mul(B)$ and $\Mulr(A)\leqstruct \Mulr(B)$.
Moreover, $A \geqaug B$ implies that $\Mul(A)\geqaug \Mul(B)$ and $\Mulr(A)\geqaug \Mulr(B)$
\end{lemma}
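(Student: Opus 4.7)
The proof is a routine unfolding of definitions, and I would organize it as four short cases (substructure vs.\ augmentation, crossed with multiset embedding vs.\ multiset ordering).

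The plan is to first observe the compatibility at the level of carriers. If $A \leqstruct B$, then $A \subseteq B$, so every finite multiset over $A$ is a finite multiset over $B$, giving the inclusion $M(A) \subseteq M(B)$. If instead $A \geqaug B$, then $A$ and $B$ share the same carrier, hence $M(A) = M(B)$ as sets. Both cases thus reduce the problem to comparing the two orderings on a common set of multisets.

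For $A \leqstruct B$, I would show that the defining formulas of $\leq_{\diamond}$ and $\leq_r$ on $\Mul(A)$ and $\Mulr(A)$ agree with their $B$-counterparts restricted to $M(A)$. For multisets $m,m' \in M(A)$, the condition ``there exists an injection $f\colon m\to m'$ such that $x\leq_A f(x)$ for all $x\in m$'' is literally equivalent to the same statement with $\leq_B$, since every element of $m$ and $m'$ lies in $A$ and $\leq_A$ is the restriction of $\leq_B$ to $A$. The argument for $\leq_r$ is identical: the quantifications range over elements of $m\setminus(m\cap m')$ and $m'\setminus(m\cap m')$, which all live in $A$, so the strict comparison $<_A$ coincides with $<_B$ on these elements. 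This gives $\Mul(A)\leqstruct \Mul(B)$ and $\Mulr(A)\leqstruct \Mulr(B)$.

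For $A \geqaug B$, I would verify that any witness to comparability in $\Mul(B)$ or $\Mulr(B)$ is automatically a witness in $\Mul(A)$ or $\Mulr(A)$. If $m \leq_{\diamond}^B m'$, the witnessing injection $f$ satisfies $x\leq_B f(x)$, and since $\leq_B \subseteq \leq_A$, we have $x\leq_A f(x)$, so $m\leq_{\diamond}^A m'$. For the multiset ordering, the only subtle point is that we need $<_B \subseteq <_A$; but $x<_B y$ means $x\leq_B y$ and $x\neq y$, which via $\leq_B\subseteq\leq_A$ gives $x\leq_A y$ with $x\neq y$, hence $x<_A y$. Thus the defining condition of $\leq_r^B$ implies that of $\leq_r^A$, yielding $\Mul(A)\geqaug \Mul(B)$ and $\Mulr(A)\geqaug \Mulr(B)$.

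There is no real obstacle in this proof; the only point requiring a line of care is the translation from $\leq_B\subseteq\leq_A$ to $<_B\subseteq <_A$ for the multiset ordering case, which follows immediately from antisymmetry of both partial orders. The rest is pure definition chasing, which is why I would present it compactly as four one-line verifications rather than a long calculation.
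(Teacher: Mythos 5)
Your proof is correct; the paper states this lemma without proof, treating it as immediate, and your definition-chasing argument (carrier inclusion resp.\ equality, then checking that the witnesses for $\leq_{\diamond}$ and $\leq_r$ transfer, with the one-line remark that ${\leq_B}\subseteq{\leq_A}$ implies ${<_B}\subseteq{<_A}$) is exactly the intended verification.
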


\subsection{Ordinal invariants of the set of finite multisets}

Van der Meeren, Rathjen and Weiermann computed the \mot of $\Mul(X)$ and $\Mulr(X)$. 

\begin{theorem}[M.o.t. of the multiset embedding \cite{vandermeeren2015,weiermann2009}]
\label{thm-mul-o}
Let $X$ be a wpo. 

Then $\o(\Mul(X))=\om^{\widehat{\o(X)}}$, where $\widehat{\alpha} \eqdef \om^{\alpha'_1}+ \dots + \om^{\alpha'_n}$
for all $\alpha = \om^{\alpha_1}+ \dots + \om^{\alpha_n}$, with $\alpha_i'$ is $\alpha_i + 1$ when $\alpha_i$ is the sum of an $\epsilon$-number and a finite ordinal, otherwise $\alpha_i'=\alpha_i$.
\end{theorem}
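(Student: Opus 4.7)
The plan is to prove this by induction on $\o(X)$, establishing matching upper and lower bounds. The first algebraic observation is that $\alpha \mapsto \widehat{\alpha}$ distributes over ordinal sums on Cantor normal forms: $\widehat{\alpha + \beta} = \widehat{\alpha} + \widehat{\beta}$ whenever $\alpha + \beta$ is itself in CNF, and similarly $\widehat{\alpha \oplus \beta} = \widehat{\alpha} \oplus \widehat{\beta}$. This makes the target formula compatible with the composition lemmas: the disjoint sum transformation (\cref{lem-mul-sqcup}) combined with the m.o.t.\ of cartesian products, and the direct sum transformation (\cref{lem-mul-plus}) combined with the m.o.t.\ of lexicographic products (\cref{prod-lex}), both preserve the formula inductively.

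For the upper bound, I would work directly with the residual equation for $\o$ applied to $\Mul(X)$, exploiting the clean identity $\Mul(X)_{\not\geq \mset{x}} = \Mul(X_{\not\geq x})$ at singleton-multiset residuals to recurse on $X_{\not\geq x}$, which has strictly smaller m.o.t. Combined with the compositional reductions above, this reduces the whole problem to computing $\o(\Mul(\om^\gamma))$ for single-term ordinals. A secondary induction on $\gamma$ then performs the case analysis according to whether $\gamma$ has the form $\epsilon + k$ with $\epsilon$ an $\epsilon$-number and $k$ finite, which is exactly the extra-$+1$ condition in the definition of $\widehat{\cdot}$.

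For the matching lower bound, I would exhibit an explicit maximal linearisation of $\Mul(X)$ of order type $\om^{\widehat{\o(X)}}$, representing multisets as polynomial-like Cantor normal form expressions in $\om$ indexed by a fixed linearisation of $X$ and ordered lexicographically by coefficients. The main obstacle throughout is the $\epsilon$-case: when $\gamma = \epsilon + k$, the fixed-point identity $\om^\epsilon = \epsilon$ both forces an extra unit of rank in the upper bound (residuals of $\om^\gamma$ near its supremum fail to strictly decrease in exponent) and allows an explicit construction of bad sequences of length $\om^{\om^{\gamma+1}}$, typically via a fundamental sequence approaching $\epsilon$ from below. The generic case, by contrast, follows from a routine induction once the single-term reduction is in place.
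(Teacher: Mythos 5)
First, a point of reference: the paper does not prove \cref{thm-mul-o} at all --- it is imported from Weiermann and from Van der Meeren, Rathjen and Weiermann --- so there is no in-paper proof to compare against, and your proposal must stand on its own. It does not: the upper bound has a genuine gap. The residual equation \cref{Reso} applied to $\Mul(X)$ requires bounding $\o(\Mul(X)_{\not\geq m})$ for \emph{every} finite multiset $m$; the identity $\Mul(X)_{\not\geq \mset{x}}=\Mul(X_{\not\geq x})$ covers only singletons, and restricting the supremum to singleton residuals yields a lower bound on $\o(\Mul(X))$, not an upper one. For $|m|\geq 2$ the residual is a union over sub-multisets $S\subseteq m$ of sets of the form ``multisets with fewer than $|S|$ elements above some element of $S$'', and controlling the maximal order types of these pieces (and checking that their natural sums stay below $\om^{\widehat{\o(X)}}$) is where the real work of the theorem lies. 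Moreover, the compositional reductions you invoke cannot carry the upper bound: $\o(X)\geqaug X$ gives $\Mul(\o(X))\geqaug\Mul(X)$, hence $\o(\Mul(\o(X)))\leq\o(\Mul(X))$, which bounds the general case from \emph{below} by the ordinal case; likewise \cref{lem-mul-plus} is an augmentation in the direction that only yields $\o(\Mul(A+B))\geq\o(\Mul(A))\cdot\o(\Mul(B))$ (and the inequality of structures is strict: $\Mul(2)$ is not linear, e.g.\ $\mset{0,0}\perp\mset{1}$, while $\Mul(1)\cdot\Mul(1)\equiv\om^2$ is). So neither step legitimately reduces the upper bound to computing $\o(\Mul(\om^\gamma))$.

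Second, the $\epsilon$-number case --- which is the entire content of the correction $\alpha_i\mapsto\alpha_i+1$ in the definition of $\widehat{\cdot}$ --- is named but not argued. Saying that the lower bound comes from ``an explicit maximal linearisation \dots typically via a fundamental sequence approaching $\epsilon$ from below'' points at exactly the construction that the cited proofs spend most of their effort on (a reification of an ordinal notation system inside $\Mul(\om^\gamma)$); nothing in the sketch indicates how linearisations of order type cofinal in $\om^{\om^{\gamma+1}}$ are actually produced when $\om^\gamma$ is a fixed point of $\xi\mapsto\om^\xi$, nor how the matching upper bound $\o(\Mul(\om^{\epsilon}))\leq\om^{\om^{\epsilon+1}}$, rather than something larger, is obtained. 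The algebraic observations about $\widehat{\cdot}$ distributing over Cantor normal forms are correct but peripheral. As it stands the proposal is a road map whose two hardest legs --- general multiset residuals for the upper bound, and the $\epsilon$-number lower bound --- are asserted rather than proved.
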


\begin{restatable}[M.o.t. of the multiset ordering \cite{vandermeeren2015,weiermann91}]{theorem}{oMulr}
\label{thm-mulr-o} Let $X$ be a wpo.

 Then
$\o(\Mulr(X))=\om^{\o(X)}$.
\end{restatable}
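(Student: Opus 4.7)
My plan is to prove the two inequalities $\om^{\o(X)} \leq \o(\Mulr(X))$ and $\o(\Mulr(X)) \leq \om^{\o(X)}$ separately. Both rest on the classical fact that for an ordinal $\alpha$, $\Mulr(\alpha)$ is order-isomorphic to $\om^\alpha$ via Cantor normal form.

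For the lower bound I would start from a maximal linearization $L \geqaug X$ of order type $\o(X)$, so that $L$ is (isomorphic to) the ordinal $\o(X)$. Monotonicity of $\Mulr$ under augmentation, already stated in the excerpt, gives $\Mulr(L) \geqaug \Mulr(X)$; since $L$ is linear so is $\Mulr(L)$, and the Cantor normal form bijection $\mset{\alpha_1 \geq \cdots \geq \alpha_n} \mapsto \om^{\alpha_1} + \cdots + \om^{\alpha_n}$ identifies $\Mulr(L)$ with the ordinal $\om^{\o(X)}$. Applying the inequality ``$A \geqaug B$ implies $\o(A) \leq \o(B)$'' to $\Mulr(L) \geqaug \Mulr(X)$ yields $\o(\Mulr(X)) \geq \o(\Mulr(L)) = \om^{\o(X)}$.

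For the upper bound my plan is to build a reflection $\phi \colon \Mulr(X) \to \om^{\o(X)}$, which suffices because a reflection into a linear order caps the m.o.t. by the order type of the codomain. The naive candidate, pushing a reflection $\ell \colon X \to \o(X)$ through the CNF formula, is only a monotone injection and not a reflection in general: for $X = \{a,b\}$ with $a \perp b$ and $\ell(a) = 0, \ell(b) = 1$, the resulting $\phi$ sends $\mset{a,a}$ to $2$ and $\mset{b}$ to $\om$, yet $\mset{a,a} \not\leq_r \mset{b}$. Following \cite{weiermann91,vandermeeren2015}, the remedy is to proceed by induction on $\o(X)$ and use a more refined reification parameterised by the residuals $X_{\not \geq x}$, whose m.o.t. is strictly smaller than $\o(X)$ and to which the induction hypothesis applies. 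The main obstacle will be verifying that this refined map is genuinely a reflection: one must check that any failure $m \not\leq_r m'$ of the multiset ordering forces $\phi(m) \not\leq \phi(m')$ in the target ordinal, which is the delicate combinatorial heart of the proof.
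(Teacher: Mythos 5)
The paper does not actually prove this theorem; it is imported from Weiermann and from Van der Meeren, Rathjen and Weiermann, so there is no in-paper argument to measure yours against. Judged on its own, your lower bound is complete and correct: take a maximal linearisation $L\geqaug X$ of order type $\o(X)$, use monotonicity of $\Mulr$ under augmentation together with ``$A\geqaug B$ implies $\o(A)\leq\o(B)$'', and identify $\Mulr(L)$ with $\om^{\o(X)}$ via Cantor normal form. One small caution: the paper only states $\Mulr(\alpha)\equiv\om^{\alpha}$ as a \emph{consequence} of the very theorem you are proving, so you cannot cite that sentence; your direct CNF bijection avoids the circularity, but the fact that it is an order isomorphism for $\leq_r$ deserves at least a line of justification.

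The upper bound is a genuine gap. Your counterexample showing that the naive CNF push-forward of a linearisation fails to be a reflection is correct and is a good observation, but the announced remedy --- ``a more refined reification parameterised by the residuals $X_{\not\geq x}$'' --- is never defined, and the reflection property, which you yourself call the delicate combinatorial heart of the proof, is never verified. That verification is not a routine detail: it is essentially the entire content of the theorem beyond the easy inequality $\o(\Mulr(X))\geq\om^{\o(X)}$, and in the cited sources it occupies a substantial argument (a norm-based reification built by transfinite recursion on $\o(X)$). As written, your proposal establishes one of the two inequalities and replaces the other by a statement of intent. To close it you would need to actually construct the map --- for instance by bounding $\o(\Mulr(X)_{\not\geq m})$ for each $m$ through a decomposition along sub-multisets of $m$ and residuals of $X$, in the spirit of the decomposition the paper uses for $\h(\Mulr(X))$ in \cref{thm-h-Mulr}, exploiting the indecomposability of $\om^{\o(X)}$ --- and then check the reflection (or residual) condition case by case.
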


Observe that $\om^{\o(X)}\leq\om^{\widehat{\o(X)}}$, as one would expect since $\Mulr(X)\geqaug \Mul(X)$. Furthermore, we should have $\w(\Mulr(X))\leq\w(\Mul(X))$, while $\h(\Mulr(X))\geq\h(\Mul(X))$. Unfortunately, only the height of $\Mul(X)$ is known for now.

\begin{theorem}[Height of the multiset embedding \cite{dzamonja2020}]
\label{thm-mul-h}
Let $X$ be a wpo. 

Then $\h(\Mul(X))=\h^*(X)$, where
 \begin{equation*}
		\h^*(A) \eqdef
		\begin{cases}
			\h(A) &\text{ if } \h(A)\text{ is infinite and indecomposable, } \\
			\h(A)\cdot\om           & \text{ otherwise.}
		\end{cases}
	\end{equation*}
\end{theorem}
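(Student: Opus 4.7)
I would prove the two inequalities separately. For the upper bound $\h(\Mul(X))\leq \h^*(X)$, I define the rank function $\rho\colon \Mul(X)\to \h^*(X)$ by $\rho(m)\eqdef \bigoplus_{x\in m}(r(x)+1)$, where $r(x)\eqdef \h(X_{<x})$ is the local height rank of $x$ in $X$ and $\oplus$ denotes the Hessenberg (natural) sum. Strict monotonicity of $\rho$ under $<_\diamond$ follows because any witnessing injection $f\colon m\hookrightarrow m'$ either maps some $x$ strictly below $f(x)$ -- giving a strict increase in the summand $r(x)+1<r(f(x))+1$ -- or leaves some element of $m'$ outside its image, contributing an extra strictly positive summand; strict monotonicity of $\oplus$ in each argument then yields $\rho(m)<\rho(m')$. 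Bounding the range of $\rho$ splits according to $\alpha\eqdef \h(X)$: if $\alpha=\om^\beta$ is infinite and additively indecomposable, $\alpha$ is closed under $\oplus$ and each $r(x)+1<\alpha$, so $\rho(m)<\alpha=\h^*(X)$; otherwise, a Cantor-normal-form computation shows that finite Hessenberg sums of ordinals below $\alpha$ stay below $\om^{\beta_1+1}=\alpha\cdot\om=\h^*(X)$, where $\om^{\beta_1}$ is the leading term of $\alpha$.

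For the lower bound $\h(\Mul(X))\geq \h^*(X)$, fix a maximal chain $C\leqstruct X$ of order type $\alpha=\h(X)$; since $\Mul(C)\leqstruct \Mul(X)$, it suffices to exhibit a long chain in $\Mul(\alpha)$. If $\alpha=\om^\beta$ is infinite indecomposable, the map $\gamma\mapsto \mset{\gamma}$ embeds $\alpha$ order-isomorphically into $\Mul(\alpha)$, yielding a chain of length $\alpha=\h^*(X)$. Otherwise, pick $c<\alpha$ such that $c+1$ shares the leading exponent $\beta_1$ of $\alpha$: this is possible precisely because $\alpha$ is not a power of $\om$ (take $c$ the top if $\alpha$ is a successor, else any $c\geq \om^{\beta_1}$ with $c<\alpha$; for $\alpha=1$ take $c=0$). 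For each $k<\om$, the block $(\mset{c^k,\gamma})_{\gamma\leq c}$ is a chain of length $c+1$ in $\Mul(\alpha)$, and consecutive blocks glue via the strict step $\mset{c^{k+1}}<_\diamond \mset{c^{k+1},0}$. Concatenating all blocks yields a chain of length $(c+1)\cdot\om=\om^{\beta_1+1}=\alpha\cdot\om=\h^*(X)$.

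The main obstacle is the range estimation in the upper bound: the dichotomy in the definition of $\h^*$ reflects exactly whether finite Hessenberg sums of ordinals $<\alpha$ are absorbed back into $\alpha$ (indecomposable case) or escape to precisely $\alpha\cdot\om$ (otherwise), which requires careful CNF-based arithmetic of the Hessenberg sum. The lower-bound construction is comparatively direct once the right base element $c$ has been identified.
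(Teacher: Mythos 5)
The paper does not prove this theorem: it imports it verbatim from D\v{z}amonja et al.\ \cite{dzamonja2020}, so there is no in-paper argument to compare yours against. Judged on its own, your proof is correct and self-contained. The upper bound via the rank function $\rho(m)=\bigoplus_{x\in m}(\h(X_{<x})+1)$ works: $x<y$ gives $\h(X_{<x})+1\leq\h(X_{<y})$, a witnessing injection for $m<_\diamond m'$ is either non-surjective (extra summand $\geq 1$) or a bijection that must move some element strictly upward, and strict monotonicity of $\oplus$ in each argument closes the case; the range bound then follows because $\om^{\beta}$ (resp.\ $\om^{\beta_1+1}=\h(X)\cdot\om$) is additively indecomposable and hence closed under finite natural sums of its predecessors. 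The lower bound is also sound: $\gamma\mapsto\mset{\gamma}$ handles the indecomposable case, and in the remaining case your choice of $c$ (top element if $\h(X)$ is a successor, $c=\om^{\beta_1}$ if $\h(X)$ is a decomposable limit) makes each block $(\mset{c^k,\gamma})_{\gamma\leq c}$ a chain of type $c+1$ gluing across $k$ into a chain of type $(c+1)\cdot\om=\om^{\beta_1+1}$ inside $\Mul(C)\leqstruct\Mul(X)$ for a maximal chain $C$. Two minor remarks: you should note explicitly that in the indecomposable case $\h(X)$ is a limit, so $\h(X_{<x})+1<\h(X)$ (otherwise the summands could reach $\h(X)$ itself); and the degenerate case $X=\emptyset$, where $\h(\Mul(X))=1\neq 0=\h^*(X)$, is an artefact of the statement as quoted rather than a defect of your argument.
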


\subsection{A tool to compute the width: Quasi-incomparable subsets}

Of all three ordinal invariants, the width is the less studied, since it has been introduced more recently. It is also the hardest invariant to study: unlike the \mot and the height that are reached by the order types of the maximal linearisation and the maximal chain of a wpo, respectively, the width is not embodied by a sort of "maximal antichain".

A powerful tool for analysing the width of a wpo is the notion of \emph{quasi-incomparable} subsets of a wpo, which was first used in \cite{arxiv2202.07487} for the cartesian product of several ordinals.

For any partial order $X$, we denote with $x\perp y$ when two elements $x,y\in X$ are incomparable. This notation can be extended to subsets $Y,Z$ of $X$: we denote with $Y\perp Z$ when for every $y\in Y, z\in Z$, $y\perp z$.

\begin{definition}
Let $A$ be a wpo, and $A_1,\dots,A_n$ be $n$ subsets of $A$. Then $(A_i)_{i\leq n}$ is a \emph{quasi-incomparable} family of subsets of $A$ iff for any $i<n$, for any finite $Y\subseteq A_1\cup\dots\cup A_i$, there exists $A_{i+1}'\subseteq A_{i+1}$ such that $A_{i+1}'\perp Y$ and $A_{i+1}'\equiv A_{i+1}$.
\end{definition}

This definition is slightly more restrictive than the one in \cite{arxiv2202.07487}, which only required that $\w(A_{i+1}')=\w( A_{i+1})$.

The idea behind these quasi-incomparable subsets is that sometimes one can slice a wpo $A$ into simpler subsets $A_1,\dots,A_n$ whose width is known, such that $\Inc(A_n)+\dots + \Inc(A_1)$ is embedded in $\Inc(A)$. Intuitively, it means that one can combine antichains of $A_1,\dots,A_n$ into one antichain of $A$.

This entails a practical relation between the widths of $A$ and its subsets:

\begin{lemma}
\label{lem-qi}
Let $(A_i)_{i\leq n}$ be a quasi-incomparable family of subsets of $A$. Then $\w(A)\geq \w(A_n)+\dots+\w(A_1)$.
\end{lemma}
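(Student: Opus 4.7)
The plan is to prove, by reverse induction on $k$ from $n$ down to $0$, the stronger statement
\[
(\star_k): \quad \text{for every finite antichain } s \text{ of } A \text{ with } s \subseteq A_1 \cup \cdots \cup A_k, \ \ \mathrm{rank}_{\Inc(A)}(s) \geq \w(A_n) + \cdots + \w(A_{k+1}),
\]
where the right-hand side is understood as $0$ when $k = n$. Specialising $(\star_0)$ to $s = \emptyset$ immediately yields $\w(A) \geq \w(A_n) + \cdots + \w(A_1)$, which is the lemma.

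The base case $k = n$ is trivial. For the inductive step I assume $(\star_k)$ and prove $(\star_{k-1})$. Given a finite antichain $s \subseteq A_1 \cup \cdots \cup A_{k-1}$, the quasi-incomparability of the family, applied with $i = k-1$ and $Y = s$, produces a subset $A_k^{(s)} \subseteq A_k$ with $A_k^{(s)} \equiv A_k$ and $A_k^{(s)} \perp s$. A secondary induction on $\rho = \mathrm{rank}_{\Inc(A_k^{(s)})}(\tau)$ then shows that for every finite antichain $\tau$ of $A_k^{(s)}$,
\[
\mathrm{rank}_{\Inc(A)}(s \cdot \tau) \;\geq\; \bigl(\w(A_n) + \cdots + \w(A_{k+1})\bigr) + \rho.
\]
The secondary base ($\rho = 0$) follows from $(\star_k)$, since $A_k^{(s)} \perp s$ ensures that $s \cdot \tau$ is an antichain of $A$ included in $A_1 \cup \cdots \cup A_k$. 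In the secondary step I expand $\rho = \sup_{x}\bigl(\mathrm{rank}_{\Inc(A_k^{(s)})}(\tau \cdot x) + 1\bigr)$ over valid extensions $x \in A_k^{(s)}$, apply the inner hypothesis to each $\tau \cdot x$ (noting that $s \cdot \tau \cdot x$ remains a valid antichain), and commute the supremum with the prefix $\w(A_n) + \cdots + \w(A_{k+1})$ using continuity of ordinal addition in its right argument. Taking $\tau = \emptyset$ finally gives $\mathrm{rank}_{\Inc(A)}(s) \geq \w(A_n) + \cdots + \w(A_{k+1}) + \w(A_k)$, which is exactly $(\star_{k-1})$.

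The main obstacle is to resist the tempting shortcut of restricting the QI family to $A_{\perp s}$ and inducting on the length $n$. That approach is blocked because the QI definition demands that each $A_{i+1}'$ sit inside a \emph{fixed} $A_{i+1}$, and the various copies produced by QI for different finite sets $Y$ need not live inside a single subset of $A_{i+1}$ that is simultaneously isomorphic to $A_{i+1}$ and entirely contained in $A_{\perp s}$. The reverse induction above sidesteps this by appealing to the original QI property in $A$ at each level, choosing a fresh copy $A_k^{(s)}$ tailored to the specific antichain $s$ being extended. A second point requiring care is that the recurrence must place the varying ordinal on the \emph{right} of the sum, since $\alpha \mapsto \alpha + \beta$ fails to be continuous in~$\alpha$.
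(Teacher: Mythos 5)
Your proof is correct. The paper itself states \cref{lem-qi} without proof (the notion and the lemma are imported from the cited work on quasi-incomparable families), so there is no in-paper argument to compare against; what you give is the natural double induction that one would expect to find there. The outer reverse induction on $k$, the use of quasi-incomparability to pick a fresh copy $A_k^{(s)}$ of $A_k$ incomparable to the antichain $s$ already built in $A_1\cup\dots\cup A_{k-1}$, and the inner induction on the rank of $\tau$ in $\Inc(A_k^{(s)})$ transferring that rank into $\Inc(A)$ are all sound. The two points you flag explicitly are exactly the ones that need care and you handle both correctly: the sup must be nonempty and the varying ordinal must sit on the right of the fixed prefix $\w(A_n)+\dots+\w(A_{k+1})$ so that continuity of $\gamma+(\cdot)$ applies, and one must re-invoke the quasi-incomparability property of the original family for each new finite antichain $s$ rather than trying to restrict the whole family to $A_{\perp s}$ once and for all. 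One cosmetic remark: your argument only uses $A_k^{(s)}\subseteq A_k$, $A_k^{(s)}\perp s$ and $\w(A_k^{(s)})=\w(A_k)$, so it in fact establishes the lemma under the weaker definition of quasi-incomparability (equal width rather than isomorphism) mentioned in the paper's comparison with the earlier reference.
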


\cref{lem-qi} will prove very useful soon, in the proofs of \cref{lem-succ,lb-mr}.

\section{Width of the multiset embedding}
In this section we compute the width of $\Mul(X)$ for any wpo $X$, which happens to be functional in the width of $X$:
\begin{restatable}[Width of the multiset embedding]{theorem}{widthMul}
 \label{thm-w-Mul}
For any wpo $X$, $\w(\Mul(X))=\om^{\widehat{\o(X)}-1}$.

(See \cref{thm-mul-o} for a definition of $\widehat{\alpha}$.)
\end{restatable}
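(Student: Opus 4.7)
The plan is to establish matching lower and upper bounds on $\w(\Mul(X))$, with a case split on whether $\widehat{\o(X)}$ is a limit or a successor ordinal, reducing the lower-bound computation to ordinals via augmentation.

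For the lower bound, since $\o(X) \geqaug X$ and $\Mul$ is monotone under augmentation, $\w(\Mul(X)) \geq \w(\Mul(\o(X)))$, so it suffices to work with $X = \alpha$ an ordinal. Writing $\alpha$ in Cantor normal form and grouping the trailing $\om^0$ terms into a finite chain gives $\alpha = \beta + c$ with $\beta = \om^{\alpha_1} + \dots + \om^{\alpha_k}$ (all $\alpha_i > 0$) and $c \in \mathbb{N}$. Iterated application of \cref{lem-mul-plus} yields $\Mul(\om^{\alpha_1}) \cdot \dots \cdot \Mul(\om^{\alpha_k}) \cdot \Mul(c) \geqaug \Mul(\alpha)$, so by \cref{prod-lex} the Hessenberg product of the factor widths lower-bounds $\w(\Mul(\alpha))$. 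The Height-Width theorem gives $\w(\Mul(\om^{\alpha_i})) = \om^{\om^{\alpha_i'}}$ for each $\alpha_i > 0$ (the mot $\om^{\om^{\alpha_i'}}$ is multiplicatively indecomposable and strictly exceeds $\h^*(\om^{\alpha_i}) = \om^{\alpha_i}$), while $\w(\Mul(c)) = \om^{c-1}$ via the tail-sum isomorphism $\Mul(c) \equiv \mathbb{N}^c$. Simplifying with natural sums then yields $\om^{\widehat{\alpha}-1}$ in both the limit ($c = 0$) and successor ($c \geq 1$) case.

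For the upper bound, the limit case is immediate: when $\widehat{\o(X)}$ is a limit, $\om^{\widehat{\o(X)}-1} = \om^{\widehat{\o(X)}} = \o(\Mul(X))$ and $\w \leq \o$ always. The hard case is $\o(X) = \delta + 1$ a successor, where the target $\om^{\widehat{\delta}}$ is a factor of $\om$ smaller than $\o(\Mul(X)) = \om \cdot \om^{\widehat{\delta}}$. My plan here is transfinite induction on $\o(X)$: fix a witnessing maximal $\top \in X$ with $\o(X_{\not\geq \top}) = \delta$, and stratify any antichain of $\Mul(X)$ by the multiplicity of $\top$. Within a fixed stratum the antichain restricts to one of $\Mul(X_{\not\geq \top})$, bounded by the IH at $\om^{\widehat{\delta}-1}$, while cross-stratum comparability is governed by a ``wildcard'' embedding in which extra copies of $\top$ can absorb any element $\leq \top$. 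Carefully summing over the countably many strata should then supply the extra $\om$-factor. The main obstacle is precisely this sharp successor upper bound: the naive $\w \leq \o$ and an augmentation via $\Mul(X_{\not\geq \top}) \times \mathbb{N}$ both overshoot by $\om$, while \cref{lem-mul-plus} only gives lower bounds on $\w$; moreover, when $X$ is not linear, elements of $X_{\not\geq \top}$ incomparable with $\top$ cannot be absorbed by $\top$-wildcards, complicating the stratification. A dedicated combinatorial lemma for the successor step of the induction will therefore be required.
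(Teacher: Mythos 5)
There is a genuine gap, and it starts with a misreading of the statement: the ``$-1$'' in $\om^{\widehat{\o(X)}-1}$ is \emph{left} subtraction (see the appendix), so $\widehat{\o(X)}-1=\widehat{\o(X)}$ whenever $\o(X)$ is infinite. For an infinite successor $\o(X)=\delta+1$ the theorem therefore asserts $\w(\Mul(X))=\om^{\widehat{\delta}+1}=\o(\Mul(X))$, not $\om^{\widehat{\delta}}$. Your proposed ``hard'' upper bound for that case is thus aimed at a false inequality (e.g.\ it would give $\w(\Mul(\om+1))\leq\om^{\om}$, whereas in fact $\w(\Mul(\om+1))=\om^{\om+1}$), and no stratification lemma can rescue it; conversely, your lower bound $\om^{\widehat{\beta}}\odot\om^{c-1}=\om^{\widehat{\beta}+c-1}$ falls an $\om$-factor short of the required $\om^{\widehat{\beta}+c}$ whenever $\beta$ is infinite and $c\geq 1$. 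Under the correct reading, the only case where the width sits strictly below the m.o.t.\ is $\o(X)=k$ finite, and there the upper bound is cheap: $X\geqaug\Gamma_k$, so $\Mul(X)\geqaug\Mul(\Gamma_k)\equiv\om\times\dots\times\om$ by \cref{lem-mul-sqcup}, whose width $\om^{k-1}$ is known.

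The second gap is the claim $\Mul(c)\equiv\mathbb{N}^c$ for a finite chain $c$. This is false for $c\geq 2$: the tail-count encoding sends $\Mul(c)$ onto the \emph{non-increasing} tuples of $\mathbb{N}^c$, which has a single atom ($\mset{0}$ is the unique minimal non-empty multiset since $\mset{0}\leq_{\diamond}\mset{i}$ for all $i$), whereas $\mathbb{N}^c$ has $c$ atoms; it is $\Mul(\Gamma_c)$, the multisets over the $c$-element \emph{antichain}, that is isomorphic to $\mathbb{N}^c$. The equality $\w(\Mul(c))=\om^{c-1}$ is true, but the direction you need, $\w(\Mul(c))\geq\om^{c-1}$, is precisely the nontrivial combinatorial core that your argument skips: the augmentation $\Mul(c)\geqaug\Mul(\Gamma_c)$ only yields the reverse inequality. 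The missing ingredient, in both the finite tail and the infinite successor case, is the successor step $\w(\Mul(\beta+1))\geq\w(\Mul(\beta))\cdot\om$ (\cref{lem-succ} in the paper), which the lexicographic decomposition cannot deliver since $\w(\Mul(1))=1$. The paper obtains it by working inside the residual $\Mul(\beta+1)_{\perp\mset{\beta}\times n}$ and exhibiting a quasi-incomparable family $M_k=\setof{\mset{\beta}\times(n-k)\cup m}{m\in\Mul(\beta),\,|m|>k}$, to which \cref{lem-qi} applies. Your reduction to ordinals via augmentation, the iterated lexicographic splitting of the Cantor normal form, and the use of the Height--Width theorem on indecomposable blocks all match the paper's \cref{lem-mul-alpha}; what is missing is exactly this successor lemma, and what is superfluous (indeed unprovable) is the sharp upper bound for infinite successors.
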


We already know that, in some cases, the width of the multiset embedding reaches its m.o.t.

\begin{lemma}[\cite{dzamonja2020}]
\label{thm-mul-w-equal-o-indecomposable}
If $\o(X)$ is infinite indecomposable, $\w(\Mul(X))=\o(\Mul(X))$.
\end{lemma}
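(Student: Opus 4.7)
The plan is to derive the equality from the second clause of the Height-Width Theorem (\cref{kt}) applied to $\Mul(X)$. Combined with \cref{thm-mul-o}, which gives $\o(\Mul(X)) = \om^{\widehat{\o(X)}}$, this reduces the lemma to checking two facts: (a) $\om^{\widehat{\o(X)}}$ is multiplicatively indecomposable, and (b) $\h(\Mul(X)) < \om^{\widehat{\o(X)}}$. The generic upper bound $\w(\Mul(X)) \leq \o(\Mul(X))$ then forces equality.

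For (a), write $\o(X) = \om^\alpha$ with $\alpha \geq 1$, which is possible since $\o(X)$ is infinite and additively indecomposable. Inspecting the definition of $\widehat{\cdot}$ given in \cref{thm-mul-o}, one sees that $\widehat{\o(X)} = \om^{\alpha'}$ with $\alpha' \in \{\alpha,\, \alpha+1\}$ (namely $\alpha' = \alpha+1$ exactly when $\alpha$ is of the form $\epsilon + n$). In either case $\widehat{\o(X)}$ is a power of $\om$, i.e.\ additively indecomposable, so $\om^{\widehat{\o(X)}}$ is multiplicatively indecomposable.

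For (b), \cref{thm-mul-h} gives $\h(\Mul(X)) = \h^*(X) \leq \h(X)\cdot\om \leq \om^{\alpha+1}$, using $\h(X)\leq\o(X)=\om^{\alpha}$. So it suffices to show $\alpha+1 < \widehat{\o(X)}$. If $\alpha$ is not of the form $\epsilon + n$, then $\widehat{\o(X)} = \om^{\alpha}$; this is a limit ordinal and $\alpha$ is not an $\epsilon$-number, so $\om^{\alpha} > \alpha$ and consequently $\om^{\alpha} > \alpha+1$. If instead $\alpha = \epsilon + n$, then $\widehat{\o(X)} = \om^{\alpha+1}$, and since $\alpha+1$ is a successor (hence not an $\epsilon$-number), $\om^{\alpha+1} > \alpha+1$.

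Putting (a) and (b) together with \cref{kt} yields $\w(\Mul(X)) = \o(\Mul(X))$. The only delicate point is the case distinction around $\epsilon$-numbers when comparing $\alpha+1$ with $\widehat{\o(X)}$; once both cases are handled separately, the rest is a mechanical invocation of results already in hand (\cref{thm-mul-o}, \cref{thm-mul-h}, \cref{kt}), and I do not anticipate any genuine obstacle.
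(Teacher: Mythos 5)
Your proof is correct and follows essentially the same route as the paper: apply \cref{thm-mul-o} and \cref{thm-mul-h} to verify the hypotheses of the second clause of \cref{kt} for $\Mul(X)$. The paper states the two hypotheses (multiplicative indecomposability of $\om^{\widehat{\o(X)}}$ and $\h^*(X)<\o(\Mul(X))$) without detailed justification, whereas you work out the $\epsilon$-number case analysis explicitly; your verification is sound.
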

\begin{proof}
According to \cref{thm-mul-o}, $\o(\Mul(X))=\om^{\widehat{\o(X)}}$, which is multiplicatively indecomposable when $\o(X)$ is indecomposable. Furthermore, when $\o(X)>1$, $\o(\Mul(X))>\h^*(X)=\h(\Mul(X))$ according to \cref{thm-mul-h}. Hence $\w(\Mul(X))=\o(\Mul(X))$ according to \cref{kt}.
\end{proof}

We focus for now on the set of finite multisets on an ordinal. Let us treat first the case of successor ordinals.

\begin{lemma}
\label{lem-succ}

For any successor ordinal $\alpha = \beta + 1$, $\w(\Mul(\alpha))\geq \w(\Mul(\beta))\cdot \om$.
\end{lemma}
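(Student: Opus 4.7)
The plan is to apply \cref{lem-qi} by constructing, for every $n\in\mathbb{N}$, a quasi-incomparable family $(A_i)_{1\leq i \leq n}$ of subsets of $\Mul(\alpha)$ with each $A_i\equiv\Mul(\beta)$; the lemma will then yield $\w(\Mul(\alpha))\geq \w(\Mul(\beta))\cdot n$, and taking the supremum over $n$ gives the desired $\w(\Mul(\beta))\cdot\om$. The statement is degenerate for $\beta=0$, so I assume $\beta\geq 1$ and write $\top$ for the top element $\beta$ of $\alpha=\beta+1$.

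For the construction I would take $A_i \eqdef \{m\cup\mset{\top}^{n-i} : m\in\Mul(\beta)\}$ for $1\leq i \leq n$. Then $m\mapsto m\cup\mset{\top}^{n-i}$ is an order-isomorphism $\Mul(\beta)\to A_i$: no element of $\alpha$ strictly exceeds $\top$, so any injection witnessing a comparison in $A_i$ must send $\top$'s to $\top$'s, and since the $\top$-counts on both sides coincide, the injection restricts to a witness of the corresponding comparison between the $\Mul(\beta)$-parts. To verify quasi-incomparability, given $i<n$ and a finite $Y\subseteq A_1\cup\dots\cup A_i$, I would set $M\eqdef\max_{y\in Y}|y|$, fix $m_0 \eqdef \mset{0}^{M+n+1}$, and define $A'_{i+1}\eqdef\{m\cup m_0 \cup \mset{\top}^{n-i-1} : m\in\Mul(\beta)\}$. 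The incomparability $A'_{i+1}\perp Y$ follows from two simple checks: any $a'\in A'_{i+1}$ carries only $n-i-1$ copies of $\top$ while every $y\in Y$ carries at least $n-i$ of them, so $y\not\leq a'$ by $\top$-counting; and the $\Mul(\beta)$-part of any $a'$ has size at least $M+n+1$, strictly larger than the capacity $|m_y|+(i-j+1)\leq M+n$ available in any $y=m_y\cup\mset{\top}^{n-j}\in A_j\subseteq Y$, so $a'\not\leq y$.

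The main technical obstacle is to verify $A'_{i+1}\equiv A_{i+1}$, which reduces to cancellativity of the multiset embedding: if $m\cup a \leq m'\cup a$ then $m\leq m'$. I would prove this by induction on $|a|$, handling the single-element case by an exchange argument on the witnessing injection --- if the rightmost copy of the extra element is not claimed by the corresponding leftmost copy, one can swap without violating the order-preserving condition, since the two copies share the same value. Once cancellativity is in hand, $m\mapsto m\cup m_0\cup\mset{\top}^{n-i-1}$ is an order-embedding of $\Mul(\beta)$ onto $A'_{i+1}$, yielding $A'_{i+1}\equiv\Mul(\beta)\equiv A_{i+1}$ and completing the construction.
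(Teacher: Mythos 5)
Your proof is correct and follows essentially the same approach as the paper's: both slice $\Mul(\alpha)$ into $n$ quasi-incomparable copies of $\Mul(\beta)$ distinguished by the number of occurrences of the top element $\beta$, and conclude via \cref{lem-qi} and a supremum over $n$. The differences are minor --- the paper works inside the residual $\Mul(\alpha)_{\perp\mset{\beta}\times n}$ and forces incomparability with $Y$ by restricting to multisets of size $>s(Y)$, whereas you pad with copies of $0$ and justify the required isomorphisms through an explicit cancellativity lemma (arguably a more careful verification of the $A'_{i+1}\equiv A_{i+1}$ clause than the paper's appeal to $\Mul_{>s(Y)}(\beta)$), and you rightly set aside $\beta=0$, a degenerate case the paper never invokes.
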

\begin{proof}
We write $\Mul_{>k}(X)$ for the substructure of $\Mul(X)$ where multisets have strictly more than $k$ elements.
 According to \cref{Resw}, 
\[
\w(\Mul(\alpha))=\sup \;\{\w(\Mul(\alpha)_{\perp m})+1\;|\;m\in \Mul(\alpha)\}\;.
\]
Let us compute $\w(\Mul(\alpha)_{\perp m_n})$ with $m_n=\mset{\beta} \times n$ for any $n\in\mathbb{N}$. 
Let $M_k \eqdef \setof{\mset{\beta} \times (n-k) \cup m}{ m \in \Mul_{>k}(\beta)}$ for $k\in[1,n]$ be a family of subsets of $\Mul(\alpha)$. These subsets are relevant because, for any $k\in [1,n]$, $M_k\equiv \Mul(\beta)$, and for all $m \in \M_k$, $m\perp m_n$ since $|m|>|m_n|$.
Moreover, $(M_k)_{k\in[1,n]}$ is a quasi-incomparable family of subsets of $\Mul(\alpha)_{\perp m_n}$: for any $i<n$, for any finite $Y\subset M_1 \cup \dots \cup M_i$, let $s(Y) = max \{|m|, m\in Y\}$. Observe that $M_{i+1}$ contains $M_{i+1}\cap \Mul_{>s(Y)}(\beta)$ which is incomparable to $Y$, and isomorphic to $M_{i+1}$, which is also isomorphic to $\Mul(\beta)$ .

Therefore, according to \cref{lem-qi}, $\w(\Mul(\alpha)_{\perp m_n})\geq \w(\Mul(\beta))\cdot n$. Hence \cref{Resw} entails:
\begin{align*}
w(\Mul(\alpha))&\geq \sup \{\w(\Mul(\alpha)_{\perp m_n})+1\;|\;n\in \mathbb{N}\}\\
&\geq \w(\Mul(\beta))\cdot \om\;.
\end{align*}
\end{proof}

\begin{lemma}
\label{lem-mul-alpha}
For any infinite ordinal $\alpha$, $\w(\Mul(\alpha))=\o(\Mul(\alpha))$.
\end{lemma}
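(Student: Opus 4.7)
The plan is transfinite induction on $\alpha$, with the upper bound $\w(\Mul(\alpha))\leq\o(\Mul(\alpha))$ immediate. For the lower bound, I split on the shape of $\alpha$: (i) infinite indecomposable, (ii) infinite successor, or (iii) non-indecomposable limit.

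In case (i), $\o(\alpha)=\alpha$ is already infinite indecomposable, so \cref{thm-mul-w-equal-o-indecomposable} yields the equality directly. In case (ii), $\alpha=\beta+1$ with $\beta\geq\om$, so the inductive hypothesis gives $\w(\Mul(\beta))=\om^{\widehat{\beta}}$, and \cref{lem-succ} lifts this to $\w(\Mul(\alpha))\geq\om^{\widehat{\beta}}\cdot\om=\om^{\widehat{\beta}+1}$. A short CNF check confirms $\widehat{\beta+1}=\widehat{\beta}+1$ (appending a trailing $\om^0$ does not trigger the hat, since $0$ is not a sum of an $\epsilon$-number and a finite ordinal), so this matches $\o(\Mul(\alpha))$.

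For case (iii), I decompose $\alpha=\gamma+\delta$ via its Cantor normal form, where $\delta=\om^{\alpha_n}$ is the final summand (necessarily an indecomposable limit, so case (i) applies to it) and $\gamma$ collects the preceding summands (necessarily infinite, for otherwise absorption would force $\alpha=\delta$, contradicting non-indecomposability). Then \cref{lem-mul-plus} says $\Mul(\gamma)\cdot\Mul(\delta)\geqaug\Mul(\alpha)$, so $\w(\Mul(\alpha))\geq\w(\Mul(\gamma)\cdot\Mul(\delta))$, and \cref{prod-lex} evaluates this to $\w(\Mul(\gamma))\odot\w(\Mul(\delta))=\om^{\widehat{\gamma}}\odot\om^{\widehat{\delta}}$ using the inductive hypothesis on $\gamma<\alpha$ and case (i) on $\delta$.

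The main obstacle is the final arithmetic identity $\om^{\widehat{\gamma}}\odot\om^{\widehat{\delta}}=\om^{\widehat{\alpha}}$. It splits into two pieces. First, the Hessenberg-based product $\odot$ should satisfy $\om^a\odot\om^b=\om^{a+b}$ on additively indecomposable operands, which is where I rely on the appendix definition of $\odot$ (intuitively, the recursion $\alpha\odot(\beta+1)=(\alpha\odot\beta)\oplus\alpha$ iterates $\om^a$ exactly $\om^b$-many times via natural sum, yielding $\om^{a+b}$). Second, I need $\widehat{\gamma}+\widehat{\delta}=\widehat{\gamma+\delta}$, which follows from the fact that the hat operation acts pointwise on CNF exponents and preserves their weak decrease, so the CNF of $\gamma+\delta$ and the concatenation $\widehat{\gamma}\widehat{\delta}$ of hats agree term by term. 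Assembling these pieces gives $\w(\Mul(\alpha))\geq\om^{\widehat{\alpha}}=\o(\Mul(\alpha))$, closing the induction.
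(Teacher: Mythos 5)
Your proof is correct and follows essentially the same route as the paper's: the same three-way case split (infinite indecomposable via \cref{thm-mul-w-equal-o-indecomposable}, successor via \cref{lem-succ}, decomposable limit via \cref{lem-mul-plus} and \cref{prod-lex}), with the same inductive structure. You merely make explicit a few ordinal-arithmetic facts the paper leaves implicit (that $\widehat{\beta+1}=\widehat{\beta}+1$, that the hat commutes with CNF concatenation, and the evaluation of $\odot$ on powers of $\om$ — where only the lower bound $\alpha\cdot\beta\leq\alpha\odot\beta$ is actually needed), as well as the check that the inductive hypothesis is applied only to infinite ordinals.
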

\begin{proof}
We already know that $\w(\Mul(\alpha))\leq\o(\Mul(\alpha))$.
We prove the lower bound by induction on $\alpha$:

\begin{itemize}

\item If $\alpha$ is infinite indecomposable, see \cref{thm-mul-w-equal-o-indecomposable}. 

\item If $\alpha= \beta + 1$ , then according to \cref{lem-succ},
\begin{align*}
\w(\Mul(\alpha)) &\geq \w(\Mul(\beta))\cdot \om\\
 &\overset{IH}{=} \o(\Mul(\beta))\cdot\om\\
 &=\om^{\widehat{\beta}+1}=\om^{\widehat{\beta+1}}=\o(\Mul(\alpha)) \text{ according to \cref{thm-mul-o}.}
\end{align*}
\item If $\alpha = \beta + \om^{\rho}$ with $\beta,\om^{\rho}<\alpha$ and $\rho>0$, then according to the transformation lemma \ref{lem-mul-plus}, $\Mul(\alpha)\geqaug \Mul(\beta)\cdot\Mul(\om^{\rho})$. Hence according to \cref{prod-lex},
\begin{align*}
\w(\Mul(\alpha))&\geq \w(\Mul(\beta))\odot\w(\Mul(\om^{\rho}))\\
	&\overset{IH}{=} \o(\Mul(\beta))\odot \o(\Mul(\om^{\rho}))\\
	&= \om^{\widehat{\beta}}\odot \om^{\widehat{\om^{\rho}}}
	= \om^{\widehat{\alpha}}\\
	& = \o(\Mul(\alpha))\;.
\end{align*}
\end{itemize}
\end{proof}

We can now prove that \cref{thm-mul-w-equal-o-indecomposable} generalizes to any infinite $X$.
\begin{lemma}
\label{w-mul-infinite}
$\w(\Mul(X))=\o(\Mul(X))$ if $\o(X)$ is infinite.
\end{lemma}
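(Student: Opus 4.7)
The plan is to reduce to the ordinal case already handled in \cref{lem-mul-alpha} by passing through a maximal linearisation of $X$.

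First I would observe that, as recalled in the example of Section~2.3, $\o(X) \geqaug X$: the wpo $X$ admits an augmentation to a linear order whose order type is the ordinal $\o(X)$, and as a wpo this augmentation is isomorphic to the ordinal $\o(X)$ itself. Because $\Mul$ is monotone with respect to augmentation (the last lemma of Section~2.4), this promotes to $\Mul(\o(X)) \geqaug \Mul(X)$, and the comparison lemma for widths under augmentation yields
\[
\w(\Mul(X)) \;\geq\; \w(\Mul(\o(X))).
\]

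Next I would apply \cref{lem-mul-alpha} to the ordinal $\o(X)$, which is infinite by hypothesis, obtaining $\w(\Mul(\o(X))) = \o(\Mul(\o(X)))$. Using \cref{thm-mul-o} together with the fact that the m.o.t.\ of an ordinal equals that ordinal, $\o(\Mul(\o(X))) = \om^{\widehat{\o(X)}}$, which is exactly $\o(\Mul(X))$ by a second application of \cref{thm-mul-o}. Combined with the universal upper bound $\w(\Mul(X)) \leq \o(\Mul(X))$, this closes the equality.

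I do not anticipate any genuine obstacle here: the argument is essentially bookkeeping. The only point that must be verified carefully is that the augmentation of $X$ to its maximal linearisation is preserved by the $\Mul$ construction, which is precisely the content of the monotonicity lemma for $\Mul$ under augmentation stated just before Section~2.5. Once that is in place, the lower bound $\w(\Mul(X)) \geq \w(\Mul(\o(X)))$ is immediate, and the rest is ordinal computation already available in the excerpt.
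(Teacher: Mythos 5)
Your proof is correct and follows essentially the same route as the paper's: pass to the maximal linearisation $\o(X)\geqaug X$, promote this to $\Mul(\o(X))\geqaug\Mul(X)$ to get the lower bound on width, invoke \cref{lem-mul-alpha} for the infinite ordinal $\o(X)$, and close via the functionality of $\o(\Mul(\cdot))$ in $\o(\cdot)$ from \cref{thm-mul-o}. No gaps.
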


\begin{proof}
Let $\alpha=\o(X)$. Then $X\leqaug \alpha$, hence $\Mul(X) \leqaug \Mul(\alpha)$. Thus
$$\w(\Mul(\alpha))\leq \w(\Mul(X)) \leq \o(\Mul(X))\;,$$
and $\o(\Mul(X))=\o(\Mul(\alpha))$ since $\o(\Mul(X))$ only depends on $\o(X)=\alpha$.
According to \cref{lem-mul-alpha}, $\w(\Mul(\alpha))=\o(\Mul(\alpha))$, hence $\w(\Mul(X))=\o(\Mul(X))$.
\end{proof}

We can also compute the width of $\Mul(X)$ when $X$ is a finite wpo:

For any $k<\om$, we define $\Gamma_k$ as the disjoint union of $k$ singleton wpos: $$\Gamma_k\eqdef \overbrace{1 \sqcup \dots\sqcup 1}^k\;.$$

\begin{lemma}
\label{w-mul-finite}
If $\o(X)$ is finite, then $\w(\Mul(X))=\om^{\o(X)-1}$.
\end{lemma}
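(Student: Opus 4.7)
The plan is to sandwich $X$ between its antichain and chain forms, then pass to multisets. Write $k = \o(X)$; since $k$ is finite and any linearisation of a finite wpo is a chain of length $|X|$, $X$ has exactly $k$ elements, and up to isomorphism $\Gamma_k \leqaug X \leqaug k$ (starting from the empty relation one first augments to $\leq_X$ and then completes to a linear extension). Monotonicity of $\Mul(\cdot)$ under augmentation then gives $\Mul(\Gamma_k) \leqaug \Mul(X) \leqaug \Mul(k)$, and so
\[
\w(\Mul(k)) \;\leq\; \w(\Mul(X)) \;\leq\; \w(\Mul(\Gamma_k))\,.
\]

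For the lower bound I would prove $\w(\Mul(k)) \geq \om^{k-1}$ by induction on $k$. The base case $k=1$ is immediate: $\Mul(1) \equiv \mathbb{N}$, so $\w(\Mul(1)) = 1 = \om^0$. For the inductive step, \cref{lem-succ} applied with $\alpha = k+1$ and $\beta = k$ yields
\[
\w(\Mul(k+1)) \;\geq\; \w(\Mul(k)) \cdot \om \;\geq\; \om^{k-1} \cdot \om \;=\; \om^k\,.
\]

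For the upper bound, iterating the disjoint-sum transformation \cref{lem-mul-sqcup} gives $\Mul(\Gamma_k) \equiv \Mul(1)^k \equiv \mathbb{N}^k$ with componentwise order, and the classical computation of the width of a finite cartesian power of $\om$ supplies $\w(\mathbb{N}^k) = \om^{k-1}$ \cite{arxiv2202.07487}. Combining both bounds gives $\w(\Mul(X)) = \om^{k-1} = \om^{\o(X)-1}$, matching $\om^{\widehat{\o(X)}-1}$ since $\widehat{k} = k$ for finite $k$. The main obstacle is precisely this upper bound: Kri\v{z}--Thomas (\cref{kt}) together with $\o(\Mul(\Gamma_k)) = \om^k$ and $\h(\Mul(\Gamma_k)) = \om$ yields only $\om^k \leq \om \otimes \w(\Mul(\Gamma_k))$, i.e. a \emph{lower} bound $\w \geq \om^{k-1}$, and the indecomposability clause is unavailable because $\om^k$ is not multiplicatively indecomposable for $k \geq 2$. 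Hence importing the external value $\w(\mathbb{N}^k) = \om^{k-1}$ (or a direct combinatorial analysis of antichains of $\mathbb{N}^k$) seems essential here.
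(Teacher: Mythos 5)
Your proof is correct and follows essentially the same route as the paper's: sandwiching $X$ between $\Gamma_k$ and the chain $k$ via augmentation, obtaining the lower bound $\w(\Mul(k))\geq\om^{k-1}$ by iterating \cref{lem-succ}, and the upper bound from $\Mul(\Gamma_k)\equiv\om\times\dots\times\om$ via \cref{lem-mul-sqcup} together with the known width of that cartesian power from \cite{arxiv2202.07487}. The only difference is cosmetic (explicit induction versus ``applied $(k-1)$ times'', plus your side remark on why \cref{kt} alone does not suffice).
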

\begin{proof}
Let $k=\o(X)$. Then $\Gamma_k\leqaug X \leqaug k$, hence $\w(\Mul(\Gamma_k))\geq\w(\Mul(X))\geq \w(\Mul(k))$.

Since $\Mul(\Gamma_1)=\om$, the transformation lemma \ref{lem-mul-sqcup} tells us that $\Mul(\Gamma_k)$ is isomorphic to the cartesian product $\om\times\dots\times\om$ $k$ times. In general, we do not know how to compute the width of a cartesian product, but in this special case we know that $\w(\Mul(\Gamma_k))=\om^{k-1}$ \cite{arxiv2202.07487}.

Furthermore, according to \cref{lem-succ} applied $(k-1)$ times, $\w(\Mul(k))\geq \w(\Mul(1))\cdot \om^{k-1}=\om^{k-1}$. Therefore $\w(\Mul(X))=\om^{k-1}=\om^{\o(X)-1}$.

\end{proof}

We can now prove this section's main result:
\widthMul*
\begin{proof}
If $\o(X)$ is finite, then $\widehat{\o(X)}=\o(X)$. On the other hand, if $\o(X)$ is infinite, then $\widehat{\o(X)}$ is infinite too, hence $\widehat{\o(X)}-1=\widehat{\o(X)}$. Hence $\w(\Mul(X))=\om^{\widehat{\o(X)-1}}$ follows from \cref{w-mul-finite,w-mul-infinite}.
\end{proof}

\section{Height and width of the multiset ordering}

The \mot is the only invariant we know how to compute compositionally for the multiset ordering.
\oMulr*

Since the multiset ordering of a linear ordering remains linear, we can deduce that for any ordinal $\alpha$, $\Mulr(\alpha)\equiv \om^{\alpha}$. 

For the height of $\Mulr(X)$, we have a result similar to \cref{thm-mulr-o}.

\begin{theorem}[Height of the multiset ordering]
\label{thm-h-Mulr}
Let $X$ be a wpo. 

Then $\h(\Mulr(X))=\om^{\h(X)}$.
\end{theorem}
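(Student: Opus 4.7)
The plan is to prove $\h(\Mulr(X)) \geq \om^{\h(X)}$ and $\h(\Mulr(X)) \leq \om^{\h(X)}$ separately. The lower bound is immediate: any maximal chain in $X$ gives $\h(X) \leqstruct X$, hence $\Mulr(\h(X)) \leqstruct \Mulr(X)$ by monotonicity. Since $\h(X)$ is linearly ordered, the remark following \cref{thm-mulr-o} identifies $\Mulr(\h(X)) \equiv \om^{\h(X)}$, whose height as an ordinal is itself, so $\h(\Mulr(X)) \geq \om^{\h(X)}$.

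For the upper bound, I would construct a strictly monotone rank map $\varphi: \Mulr(X) \to \om^{\h(X)}$ from the standard well-founded rank $\rho(x) \eqdef \h(X_{<x})$ on $X$, which satisfies $\rho(x) < \h(X)$ and $x <_X y \Rightarrow \rho(x) < \rho(y)$, by setting
\[ \varphi(\mset{x_1, \ldots, x_n}) \eqdef \om^{\rho(x_1)} \oplus \cdots \oplus \om^{\rho(x_n)}. \]
Additive indecomposability of $\om^{\h(X)}$ guarantees that $\varphi$ lands in $\om^{\h(X)}$. The core step is showing $m <_r m' \Rightarrow \varphi(m) < \varphi(m')$. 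Writing $m_0 = m \cap m'$, $Y = m \setminus m_0$ and $Z = m' \setminus m_0$, cancellation of natural sum reduces the claim to $\varphi(Y) < \varphi(Z)$. If $Y = \emptyset$ this is clear since $Z \neq \emptyset$. Otherwise the definition of $<_r$ provides a dominator map $\phi: Y \to Z$ with $y <_X \phi(y)$, and partitioning $Y$ along $\phi$, each fibre $\phi^{-1}(z)$ contributes strictly less than $\om^{\rho(z)}$: every summand $\om^{\rho(y)}$ with $y \in \phi^{-1}(z)$ is strictly below $\om^{\rho(z)}$, and since the fibre is finite and $\om^{\rho(z)}$ is additively indecomposable, the natural sum stays below $\om^{\rho(z)}$. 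Natural-summing these strict inequalities over $z \in Z$ gives $\varphi(Y) < \varphi(Z)$.

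Strict monotonicity of $\varphi$ then yields $\h(\Mulr(X)) \leq \h(\om^{\h(X)}) = \om^{\h(X)}$, as any strict chain in $\Mulr(X)$ maps to one of the same length in $\om^{\h(X)}$. The main obstacle I foresee is precisely the combinatorial comparison of $\varphi(Y)$ with $\varphi(Z)$, in particular justifying the fibre-wise use of additive indecomposability when several distinct elements of $Y$ share a single dominator in $Z$; the rest is routine ordinal bookkeeping.
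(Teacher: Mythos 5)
Your proof is correct, and the upper bound goes by a genuinely different route than the paper's. The paper proceeds by induction on $\h(X)$ via the residual equation \cref{Resh}: it decomposes the down-set $\Mulr(X)_{<m}$ into finitely many pieces of the form $\Mulr(X_{<m_1})$ for nonempty sub-multisets $m_1$ of $m$, applies the induction hypothesis to each piece (whose base has strictly smaller height), and concludes by additive indecomposability of $\om^{\h(X)}$. You instead exhibit an explicit strictly monotone rank $\varphi(m)=\bigoplus_{x\in m}\om^{\rho(x)}$ with $\rho(x)=\h(X_{<x})$, which bounds the height directly with no induction on $\h(X)$; the only nontrivial step, the fibre-wise comparison $\varphi(Y)<\varphi(Z)$ along a dominator map, is exactly the point you flag, and it does go through: each nonempty fibre $\phi^{-1}(z)$ is a finite multiset of occurrences with $\rho(y)<\rho(z)$, so its natural sum stays below the additively indecomposable $\om^{\rho(z)}$, empty fibres contribute $0<\om^{\rho(z)}$, and strict monotonicity of $\oplus$ (plus $Z\neq\emptyset$) finishes the claim. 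Both arguments hinge on the same indecomposability fact; yours buys an explicit ordinal-valued ranking of $\Mulr(X)$ (useful, e.g., for termination arguments) and avoids manipulating the down-sets $\Mulr(X)_{<m}$, while the paper's version fits the residual-equation template used uniformly for the other invariants in the article. One presentational caveat: when you pass from "strictly monotone map into $\om^{\h(X)}$" to the height bound, phrase it via $\h(P_{<x})\leq\varphi(x)$ (by induction on $\varphi(x)$) rather than "chains map to chains of the same length", since height is defined as the rank of the tree of strictly decreasing sequences; the conclusion is the same.
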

\begin{proof}
Remember that $X\geqstruct\h(X)$, and thus $\Mulr(X)\geqstruct\Mulr(\h(X))\equiv \om^{\h(X)}$. Therefore $\h(\Mulr(X))\geq\om^{\h(X)}$.

We prove the upper bound by induction on $\h(X)$. If $\h(X)=0$ then $X=\emptyset$ and $\Mulr(X)$  only contains the empty multiset, so $\h(\Mulr(X))=1=\om^{0}$.

For any multiset $m\in\Mulr(X)$, we write $X_{<m}$ for $(\cap_{x\in m} X_{\not\geq x})\cap(\cup_{x\in m} X_{<x})$. Observe that $X_{<m}$ is a subset of $X$ such that $\h(X_m)<\h(X)$.
Now suppose $X$ is not empty and $m$ is any multiset on $X$:

\begin{align*}
\Mulr(X)_{<m}&=\bigcup_{m_1+ m_2=m, m_1\neq\emptyset} \setof{m'+m_2}{m'\in\Mulr(X_{<m_1})}\\
&\equiv \bigcup_{m_1\subseteq m, m_1\neq\emptyset} \Mulr(X_{<m_1})\;.
\end{align*}

Since $\h(X_{m_1})<\h(X)$, by induction hypothesis $\h(\Mulr(X_{<m_1}))\leq\om^{\h(X_{<m_1})}<\om^{\h(X)}$. Moreover, $\om^{\h(X)}$ is indecomposable. Hence:
$$\h(\Mulr(X)_{<m})\leq \bigoplus_{m_1+ m_2=m, m_1\neq\emptyset} \h(\Mulr(\cup_{x\in m_1} X_{<x}))<\om^{\h(X)}\;.$$

Therefore $\h(\Mulr(X))\leq\om^{\h(X)}$ according to \cref{Resh}.
\end{proof}

Unfortunately, the width of the multiset ordering is harder to compute, as $\w(\Mulr(X))$ is not functional in the ordinal invariants of $X$. The following example exhibits two wpos $X_1$ and $X_2$, with identical invariants, such that $\w(\Mulr(X_1))\neq\w(\Mulr(X_2))$.

\begin{example}
\label{ex-w-Mulr-non-functional}
Let $H=\Sigma_{n<\om}\Gamma_n$. An interesting property of $H$ is that $\w(H)=\h(H)=\o(H)=\om$.
Observe that $\w(\Mulr(H))\leq\o(\Mulr(H))=\om^{\om}$ and that $\w(\Mulr(H))\geq\w(\Mulr(\Gamma_n))=\om^{n-1}$ for all $0<n<\om$. Hence $\w(\Mulr(H))=\omom$.

Consider $X_1=H+H$ and $X_2=H+\om$, two wpos with the same ordinal invariants: $\o(X_i)=\h(X_i)=\om\cdot 2$ and $\w(X_i)=\om$ for $i\in\set{1,2}$. According to \cref{lem-mulr-plus}, $\w(\Mulr(X_1))=\w(\Mulr(H))\odot \w(\Mulr(H)) =\omom \odot \omom = \om^{\om\cdot 2}$ and $\w(\Mulr(X_1))=\w(\Mulr(H))\odot \w(\Mulr(\om)) =\omom \odot 1 = \omom$.
\end{example}

We will get around this non-functionality issue by introducing a new invariant in which the width of the multiset ordering is functional. But first we will show upper and lower bounds on width.

\begin{lemma}
\label{ub-mr} Let $X$ be a wpo. Then
$$\w(\Mulr(X))\leq \sup_{x\in X,n<\om}  \w(\Mulr(X)_{\perp\mset{x}})\otimes n + 1$$
\end{lemma}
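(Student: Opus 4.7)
My plan is to combine the residual equation \eqref{Resw}, namely $\w(\Mulr(X)) = \sup_m (\w(\Mulr(X)_{\perp m}) + 1)$, with an explicit reflection. This reduces the goal to the following pointwise inequality: for every $m \in \Mulr(X)$, there exist $x \in X$ and $n < \om$ with $\w(\Mulr(X)_{\perp m}) \leq \w(\Mulr(X)_{\perp \mset{x}}) \otimes n$.

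For fixed $m$, I would construct a reflection from $\Mulr(X)_{\perp m}$ into a finite disjoint union $\bigsqcup_{(x,S) \in I} \Mulr(X)_{\perp \mset{x}}$, where $I$ ranges over pairs $(x, S)$ with $x$ an element of $m$ and $S$ a sub-multiset of $m$ — a finite set of size at most $|m|\cdot\prod_x(m(x)+1)$. Given $m' \perp m$, the failure of $m \leq_r m'$ produces a \emph{witness} $x \in m$ (an element with $m(x) > m'(x)$ such that no $y$ with $m'(y) > m(y)$ satisfies $y > x$). I would choose $x(m')$ to be a maximal witness, $S(m') = m \cap m'$, and $\phi(m') = m' \setminus S(m')$, mapping $m' \mapsto (\phi(m'), x(m'), S(m'))$.

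Two verifications are then required. First, $\phi(m') \perp \mset{x(m')}$: the maximality of the witness forbids elements of $\phi(m')$ strictly above $x(m')$, and the dual failure $m' \not\leq_r m$ forces some element of $\phi(m')$ to be incomparable to $x(m')$ — for otherwise every element of $m' \setminus (m \cap m')$ would fall strictly below $x(m')$, and since $x(m') \in m \setminus (m \cap m')$ this would yield $m' \leq_r m$, contradicting $m' \perp m$. Second, the map is a reflection: for $m^1, m^2$ in the same component $(x, S)$, the pointwise identity $\phi(m^i) \setminus (\phi(m^1) \cap \phi(m^2)) = m^i \setminus (m^1 \cap m^2)$ (valid because the common $S$ cancels out) gives $\phi(m^1) \leq_r \phi(m^2) \iff m^1 \leq_r m^2$; between distinct components the target is incomparable by definition of $\sqcup$, so the reflection condition holds vacuously. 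Combining this with \cref{def-sqcup} yields $\w(\Mulr(X)_{\perp m}) \leq \bigoplus_{(x,S) \in I} \w(\Mulr(X)_{\perp \mset{x}}) \leq \w(\Mulr(X)_{\perp \mset{x^*}}) \otimes |I|$ for a maximising $x^* \in m$, completing the proof.

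The main obstacle is realising that the intersection $S = m \cap m'$ must be tracked alongside the witness: the naive map $m' \mapsto (\phi(m'), x(m'))$ fails to be a reflection, since two incomparable multisets can share the same residue (for example $m^1 = \mset{c, c}$ and $m^2 = \mset{c, a}$ both reduce to $\mset{c}$ relative to $m = \mset{a, b}$ with $a < b$ and $c \perp a, c \perp b$, even though $m^1 \perp m^2$). Recording $S$ costs only the finite factor $\prod_x(m(x)+1)$, which is precisely what produces the finite $n$ in the statement.
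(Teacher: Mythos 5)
Your proof is correct and takes essentially the same route as the paper: the paper also partitions $\Mulr(X)_{\perp m}$ according to the common part $m_2=m\cap m'$ and a witness $x\in m\setminus m_2$ with $\mset{x}\perp m'\setminus m_2$, passing to (a substructure of an augmentation by) $\bigsqcup_{m_1+m_2=m}\bigsqcup_{x\in m_1}\Mulr(X)_{\perp\mset{x}}$ and then applying the disjoint-sum width formula and the bound of a finite natural sum by $\w(\Mulr(X)_{\perp\mset{x^*}})\otimes n$. Your explicit reflection indexed by the pairs $(x,S)$ is a hands-on rendering of exactly that decomposition, with your insistence on tracking $S=m\cap m'$ corresponding to the paper's indexing over the splittings $m_1+m_2=m$.
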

\begin{proof}
Let us define the notation $\uperp$: For any multisets $m,m'\in\Mulr(X)$, $m\uperp m'$ iff $m
\cap m'=\emptyset$ and $m\perp m'$.

In other words, $m\uperp m'$ iff $m$ and $m'$ are disjoint and
there exists $x\in m$ such that for all $y'\in m'$, $x\not\geq y'$, and
there exists $x'\in m'$ such that for all $y\in m$, $x'\not\geq y$.
In particular $x'\not \geq x$.
Hence $m\uperp m'$ implies there exists $x\in m$ such that $\mset{x}\uperp m'$, which is equivalent to $\mset{x}\perp m'$. Now:

\begin{align*}
\Mulr(X)_{\perp m}&\geqaug \bigsqcup_{m_1+m_2= m, m_1\neq\emptyset} \setof{m'+m_2}{m'\in\Mulr(X),m'\uperp m_1}\\
&\equiv \bigsqcup_{m_1\subseteq m, m_1\neq\emptyset} \Mulr(X)_{\uperp m_1}\\
&\leqstruct\geqaug \bigsqcup_{m_1\subseteq m, m_1\neq\emptyset}\;\bigsqcup_{x\in m_1}\Mulr(X)_{\perp \mset{x}}
\end{align*}

Hence according to \cref{def-sqcup}, $$\Mulr(X)_{\perp m}\leq \bigoplus_{m_1\subseteq m, m_1\neq\emptyset}\;\bigoplus_{x\in m_1}\w(\Mulr(X)_{\perp \mset{x}})\;.$$

Let $x\in m$ such that $\w(\Mulr(X)_{\perp \mset{x}})$ is maximal.
Then $\w(\Mulr(X)_{\perp m})\leq \w(\Mulr(X)_{\perp \mset{x}})\otimes n$ for some $n<\om$.
Hence according to \cref{Resw},
$$\w(\Mulr(X))= \sup_{m\in\Mulr(X)} \w(\Mulr(X)_{\perp m})+1
\leq \sup_{x\in X,n<\om} \w(\Mulr(X)_{\perp \mset{x}})\otimes n + 1\;.$$
\end{proof}

Here is a similar lower bound:

\begin{lemma}
\label{lb-mr}
Let $X$ be a wpo. Then
$$\w(\Mulr(X))\geq \sup_{x\in X,n<\om} \w(\Mulr(X)_{\perp\mset{x}})\cdot n+1 $$
\end{lemma}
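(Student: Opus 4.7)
The plan is to invoke the residual equation~\eqref{Resw} applied to $\Mulr(X)$: for each pair $(x,n)$ with $x\in X$ and $n<\om$, it suffices to exhibit some $m^*\in\Mulr(X)$ whose residual $\Mulr(X)_{\perp m^*}$ has width at least $\w(\Mulr(X)_{\perp\mset{x}})\cdot n$. The natural candidate is $m^*=\mset{x}\times n$, and the width of its residual will be lower-bounded using the quasi-incomparable subset machinery of \cref{lem-qi}, mimicking the strategy of \cref{lem-succ}.

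For each $k\in[1,n]$ define
\[
M_k\eqdef\setof{\mset{x}\times(n-k)\cup m'}{m'\in\Mulr(X)_{\perp\mset{x}}}\;.
\]
A routine computation shows that $M_k\subseteq\Mulr(X)_{\perp m^*}$ and that the assignment $m'\mapsto \mset{x}\times(n-k)\cup m'$ is an order isomorphism from $\Mulr(X)_{\perp\mset{x}}$ onto $M_k$. The key point is that elements of $\Mulr(X)_{\perp\mset{x}}$ contain no copy of $x$, so the common factor $\mset{x}\times(n-k)$ lies entirely inside every pairwise intersection and cancels out of the symmetric differences appearing in the definition of $\leq_r$.

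The crux is to verify that $(M_k)_{k\in[1,n]}$ is quasi-incomparable. Given $i<n$ and a finite $Y\subseteq M_1\cup\dots\cup M_i$, write each $y\in Y$ as $\mset{x}\times(n-k_y)\cup m'_y$ with $k_y\leq i$, and set $E_Y\eqdef\bigcup_{y\in Y} m'_y$, which is a finite multiset over $X_{\not>x}$ (recall that $\cup$ denotes multiset sum here, so $m'_y\subseteq E_Y$ as multisets). The candidate copy is
\[
M_{i+1}'\eqdef\setof{\mset{x}\times(n-i-1)\cup E_Y\cup m''}{m''\in\Mulr(X)_{\perp\mset{x}}}\subseteq M_{i+1}\;,
\]
which remains isomorphic to $\Mulr(X)_{\perp\mset{x}}$ by the same argument as before.

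The main obstacle is proving $M_{i+1}'\perp Y$. Fix $m_2\in M_{i+1}'$ and $y\in Y$. The direction $y\not\leq_r m_2$ is automatic since $y$ has strictly more copies of $x$ than $m_2$ while $m_2$ contains no element $>x$: a direct computation gives $y\setminus(y\cap m_2)=\mset{x}\times(i+1-k_y)$, and none of these copies of $x$ can be dominated inside $m_2$. The converse $m_2\not\leq_r y$ is where the shift by $E_Y$ earns its keep: the multiset containment $m'_y\subseteq E_Y$ forces $y\cap m_2 = \mset{x}\times(n-i-1)\cup m'_y$, so a multiplicity count shows that $m_2\setminus(y\cap m_2)$ still contains all of $m''$; in particular it contains some element $z^*\perp x$ supplied by $m''\perp_r\mset{x}$, and this $z^*$ cannot be dominated by the sole available element on the other side, namely~$x$. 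Applying \cref{lem-qi} then gives $\w(\Mulr(X)_{\perp m^*})\geq \w(M_n)+\dots+\w(M_1)=\w(\Mulr(X)_{\perp\mset{x}})\cdot n$, and feeding this back into~\eqref{Resw} and taking the supremum over $(x,n)$ closes the argument.
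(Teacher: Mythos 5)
Your proof is correct and follows essentially the same route as the paper's: the same subsets $M_k=\setof{\mset{x}\times(n-k)\cup m'}{m'\in\Mulr(X)_{\perp\mset{x}}}$, the same shifted copy $M'_{i+1}$ obtained by adjoining the accumulated multiset $E_Y$ (the paper's $m_Y$), and the same appeal to \cref{lem-qi} and \eqref{Resw}. The only difference is that you spell out the incomparability verification $Y\perp M'_{i+1}$, which the paper leaves as an observation.
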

\begin{proof}
This proof follows the same structure as the proof of \cref{lem-succ}: We study the residual of $\Mulr(X)$ which contains every element incomparable to some multiset of the form $\mset{x}\times n$, and slice this residual into a family of quasi-incomparable subsets.

 According to \cref{Resw}, 
\begin{align*}
w(\Mulr(X))&=\sup_{m\in\Mulr(X)} \w(\Mulr(X)_{\perp m}) +1\\
&\geq \sup_{x\in X,n<\om}\w(\Mulr(X)_{\perp \mset{x}\times n}) + 1 \;.
\end{align*}

Let us compute a lower bound for $\w(\Mulr(X)_{\perp \mset{x}\times n})$ for any $x\in X$ and $n<\om$.
For all $k\in[1,n]$, let $M_k=\setof{\mset{x}\times (n-k) \cup m}{m\in\Mulr(X)_{\perp \mset{x}}}$. Observe that $M_k\equiv \Mulr(X)_{\perp \mset{x}}$ for any $k\in[1,n]$, and for all $m\in M_k$, $m\perp \mset{x} \times n$.
We claim that $(M_k)_{k\in[1,n]}$ is a quasi-incomparable family of subsets of $\Mulr(X)_{\perp (\mset{x}\times n)}$:
Let $i< n$ and $Y$ a finite subset of $M_1\cup\dots\cup M_{i}$. We define $m_Y$ and $M'_{i+1}$ as
\begin{align*}
m_y&\eqdef\bigcup_{j\leq i}\;\bigcup_{m\in (M_j\cap Y)} (m\setminus (\mset{x}\times (n-j)))\;,\\
M'_{i+1}&\eqdef\setof{\mset{x}\times (n-i-1)\cup m_Y \cup m}{m\in\Mulr(X)_{\perp \mset{x}}}\;.
\end{align*}
 Observe that $M'_{i+1}$ is a subset of and isomorphic to $M_{i+1}$,
and $Y\perp M'_{i+1}$.

 Therefore according to \cref{lem-qi},
 $\w(\Mulr(X)_{\perp (\mset{x}\times n)})\geq \w(\Mulr(X)_{\perp\mset{x}})\cdot n$.
\end{proof}

The bounds provided by \cref{ub-mr,lb-mr} actually match. Furthermore, they can be reformulated in such a way that the residual on $\Mulr(X)$ boils down to a residual on $X$:

\begin{theorem}
\label{thm-width-Mulr}
For any non-linear wpo $X$,
\begin{equation}
\label{eq-mulr-width}\tag{W}
\w(\Mulr(X))= \sup \setof{\w(\Mulr(X_{\not\geq x}))\cdot \om}{x\in X ,X_{\perp x}\neq \emptyset}\;.
\end{equation}

\end{theorem}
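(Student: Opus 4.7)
The plan is to use the two-sided bounds from \cref{ub-mr,lb-mr}, which together pin $\w(\Mulr(X))$ between $\sup_{x,n}\w(\Mulr(X)_{\perp\mset{x}})\cdot n+1$ and $\sup_{x,n}\w(\Mulr(X)_{\perp\mset{x}})\otimes n+1$, and to show that both sides collapse to the stated RHS. The bridge will be the identity
\[
\w(\Mulr(X)_{\perp\mset{x}})=\w(\Mulr(X_{\not\geq x})) \text{ whenever } X_{\perp x}\neq\emptyset,
\]
and $\w(\Mulr(X)_{\perp\mset{x}})=0$ otherwise. This rephrases a residual on the complicated structure $\Mulr(X)$ as a residual on the simpler $X$.

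To establish the identity I first give a concrete description of $\Mulr(X)_{\perp\mset{x}}$ by unpacking the definition of $\leq_r$: a multiset $m$ is incomparable to $\mset{x}$ exactly when $x\notin m$, every element of $m$ lies in $X_{\not\geq x}$, and at least one element of $m$ lies in $X_{\perp x}$ (otherwise $m\subseteq X_{<x}$ would give $m\leq_r\mset{x}$). Hence $\Mulr(X)_{\perp\mset{x}}=\Mulr(X_{\not\geq x})\setminus\Mulr(X_{<x})$, which is empty when $X_{\perp x}=\emptyset$. When $X_{\perp x}\neq\emptyset$, pick any $y\in X_{\perp x}$; the map $m\mapsto m\cup\mset{y}$ embeds $\Mulr(X_{\not\geq x})$ isomorphically as a substructure of $\Mulr(X)_{\perp\mset{x}}$, since adding a common element preserves the multiset ordering ($m\leq_r m'\iff m\cup\mset{y}\leq_r m'\cup\mset{y}$, as a short direct computation on the intersection and the difference shows). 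The matching substructure inclusion $\Mulr(X)_{\perp\mset{x}}\leqstruct\Mulr(X_{\not\geq x})$ is immediate, yielding equality of widths.

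Next I simplify both bounds. For any $\alpha>0$ written in Cantor normal form $\om^{\gamma_1}\cdot c_1+\dots+\om^{\gamma_k}\cdot c_k$ with $\gamma_1\geq\dots\geq\gamma_k$, both $\alpha\otimes n$ and $\alpha\cdot n$ have leading term $\om^{\gamma_1}\cdot(n\cdot c_1)$, so
\[
\sup_{n<\om}\alpha\otimes n=\sup_{n<\om}\alpha\cdot n=\om^{\gamma_1+1}=\alpha\cdot\om,
\]
and the trailing $+1$ is absorbed since $\alpha\cdot\om$ is a limit. Since $X$ is non-linear, there exist incomparable $x_0,y_0\in X$, so $\w(\Mulr(X)_{\perp\mset{x_0}})\geq 1$ by the identity, and both suprema are non-degenerate. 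Thus the upper bound of \cref{ub-mr} and the lower bound of \cref{lb-mr} each collapse to $\sup_{x\in X}\w(\Mulr(X)_{\perp\mset{x}})\cdot\om$, which by the identity equals $\sup\setof{\w(\Mulr(X_{\not\geq x}))\cdot\om}{x\in X,\,X_{\perp x}\neq\emptyset}$.

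The main obstacle is the precise characterization of $\Mulr(X)_{\perp\mset{x}}$: it requires careful case analysis on whether $x\in m$, and on both inequalities $m\leq_r\mset{x}$ and $\mset{x}\leq_r m$ (one of them always holds as soon as $x\in m$, which forces any $m\perp_r\mset{x}$ to omit $x$). Everything else reduces to ordinal arithmetic in Cantor normal form and a plug-in of the earlier lemmas.
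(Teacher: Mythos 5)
Your proposal is correct and follows essentially the same route as the paper's proof: combine \cref{ub-mr,lb-mr} via the observation that $\sup_{n<\om}(\alpha\cdot n+1)=\sup_{n<\om}(\alpha\otimes n+1)=\alpha\cdot\om$, and convert the residual on $\Mulr(X)$ into one on $X$ via the sandwich $\setof{\mset{y}\cup m}{m\in\Mulr(X_{\not\geq x})}\leqstruct \Mulr(X)_{\perp\mset{x}}\leqstruct \Mulr(X_{\not\geq x})$ when $X_{\perp x}\neq\emptyset$ (and emptiness otherwise). Your explicit characterization of $\Mulr(X)_{\perp\mset{x}}$ and the remark on non-degeneracy of the suprema only make explicit details the paper leaves implicit.
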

\begin{proof}
For any ordinal $\alpha$, $\sup_{n<\om} (\alpha\cdot n + 1) = \sup_{n<\om} (\alpha\otimes n + 1) = \alpha \cdot \om$.
Hence according to \cref{lb-mr,ub-mr}, $$\w(\Mulr(X))= \sup_{x\in X} (\w(\Mulr(X)_{\perp\mset{x}})\cdot \om)\;. $$
Let $x\in X$. If $X_{\perp x}=\emptyset$, then $\Mulr(X)_{\perp\mset{x}}=\emptyset$.
Otherwise let $y\in X_{\perp x}$.
Then $\setof{\mset{y}\cup m}{m\in\Mulr(X_{\not\geq x})}\leqstruct \Mulr(X)_{\perp\mset{x}}\leqstruct \Mulr(X_{\not\geq x})$.
Therefore $\w(\Mulr(X)_{\perp\mset{x}})=\w( \Mulr(X_{\not\geq x}))$ if $X_{\perp x}\neq \emptyset$, otherwise $\w(\Mulr(X)_{\perp\mset{x}})=0$.
\end{proof}

\cref{eq-mulr-width}, even though its inductive formulation makes it impractical, sparks our interest for two reasons:
First, it proves that $\w(\Mulr(X)$ is indecomposable.
Second, its structure mirrors the residual equations: it refers to a residual of $X$, and not the residual that appears in \cref{Resw} as one could expect, but the residual of \cref{Reso}. 
Furthermore, it is a supremum, indexed not on $X$ but only on elements of $X$ that are incomparable to other elements. In other words, the supremum is indexed on $X$ stripped of its "linear" elements.

These observations are what led us to the definition of fourth ordinal invariant: \emph{maximal safe order type}

\begin{definition}[Stripped subset, safe subset, maximal safe order type]
\label{def-pmot}
For any wpo $X$, $\str(X)\eqdef \setof{x\in X}{X_{\perp x}\neq\emptyset}$ is the \emph{stripped subset} of $X$.
A subset $X'$ of $X$ is \emph{safe} if there exists a maximal linearisation $\ell:X'\rightarrow \o(X')$ (i.e. a morphism) such that for all $x_1,\dots,x_n\in X'$, for all $x\in X'$ such that $\ell(x)<\ell(x_i)$ for any $i\in[1,n]$, $(X_{\not\geq x_1,\dots,x_n})_{\perp x}\neq \emptyset$. We say that $\ell$ verifies the \emph{safety condition}.

The \emph{maximal safe order type} of $X$, denoted with $\pmot(X)$, is the supremum of the m.o.t.s of safe subsets of $X$:
$$\pmot(X)\eqdef \sup\setof{\o(X')}{X'\subseteq X \text{ safe}}\;.$$
\end{definition}

Observe that $\str(\str(X))=\str(X)$, and that for any safe subset $X'$ of $X$, $X'\leqstruct \str(X)$, thus $\pmot(X)\leq\o(\str(X))$. At first glance, it looks like \cref{eq-mulr-width} could be reformulated to be functional in $\o(\str(X)$, but we will soon see that a more restricted condition like the safety condition is exactly what we need to develop this inductive expression.

\begin{lemma}
\label{pmot-reached}
For any wpo $X$, for any linearisation $\ell : \str(X) \rightarrow \o(\str(X))$, there exists a safe subset $X'$ of $\str(X)$ such that $\ell$ restricted to $X'$ verifies the safety condition, and $\o(X')\geq\delta(\o(\str(X)))$, where 
$$\delta(\alpha)\eqdef\begin{cases}
\alpha \text{ if $\alpha$ is limit,}\\
\gamma +\lfloor n/2 \rfloor \text{ if $\alpha=\gamma +n$ with $\gamma$ limit and $n<\om$.}
\end{cases}$$
\end{lemma}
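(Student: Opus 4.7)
The plan is to prove, by transfinite induction on $\alpha$, a slightly generalized statement: for any wpo $X$, any subset $Y \subseteq \str(X)$, and any linearisation $\ell : Y \to \alpha$, one can find $Y' \subseteq Y$ such that $\ell|_{Y'}$ verifies the safety condition in $X$ and the order type of $\ell(Y')$ is at least $\delta(\alpha)$. This generalization is needed because the natural inductive step deletes a single element of $Y$, whereas shrinking the ambient wpo $X$ would change $\str(X)$ in an uncontrollable way. The base case $\alpha = 0$ is handled by $Y' = \emptyset$.

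For the limit case ($\delta(\alpha) = \alpha$), I would build $Y'$ as a chain-union of safe subsets $Y'_\beta$ produced by applying the inductive hypothesis to the initial segments $Y \cap \ell^{-1}([0,\beta))$ along a cofinal sequence of $\beta < \alpha$, arranged monotonically by transfinite recursion. Because the safety condition has finite character---every failure involves a finite witness $\{x\} \cup F$---the union of a chain of safe subsets remains safe; its image under $\ell$ is cofinal in $\alpha$ and hence of order type $\alpha$.

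For the successor case $\alpha = \beta + 1$, let $z = \ell^{-1}(\beta)$ be the topmost element. Since $z$ is $\ell$-maximal in $Y$, one has $Y \setminus \{z\} \subseteq X_{\not\geq z}$. When $\delta(\alpha) = \delta(\beta)$ (non-jump), I drop $z$ and apply the IH to $(X, Y \setminus \{z\})$. When $\delta(\alpha) = \delta(\beta) + 1$ (jump, $\alpha = \gamma + n$ with $n$ even), I must include $z$: setting $B := \{w \in Y \setminus \{z\} : I(w) \subseteq X_{\geq z}\}$ and $Y^\ddagger := (Y \setminus \{z\}) \setminus B = (Y \setminus \{z\}) \cap \str(X_{\not\geq z})$, the IH applied to $(X_{\not\geq z}, Y^\ddagger, \ell|_{Y^\ddagger})$ yields a safe $Y''$, and the candidate $Y' := Y'' \cup \{z\}$ is safe in $X$ because any witness $y \in (X_{\not\geq z})_{\perp x}$ dodging a finite $F \subseteq Y''$ above $x$ automatically dodges $\{z\} \cup F$ (since $y \not\geq z$). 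The main obstacle is the order-type accounting for this jump case---we need $\delta(\text{ot}(\ell(Y^\ddagger))) + 1 \geq \delta(\alpha)$---which is immediate when $B$ is empty and, thanks to the parity structure of $\delta$, still holds when exactly one element of $B$ falls in the finite tail; when $B$ is larger in the tail, the strategy must flip, exploiting that $I(w) \subseteq X_{\geq z}$ for $w \in B$ prevents any $F \subseteq Y \setminus \{z\}$ of elements incomparable to $z$ from covering $I(w)$, so excluding $z$ and keeping more of $Y$ becomes preferable. A final case split on the structure of $Y \cap X_{<z}$---which governs whether multi-killers can reach up to $z$ through chains---closes the argument.
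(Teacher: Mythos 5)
Your overall skeleton is sound and matches the paper's in spirit: build the safe subset by transfinite induction along $\ell$, and handle limit stages by taking unions of a chain of safe subsets, which works because a violation of the safety condition involves only a finite witness set $\{x,x_1,\dots,x_n\}$. The problem is that the successor step --- the only place where the factor $\lfloor n/2\rfloor$ in $\delta$ can possibly come from --- is not actually proved. You correctly identify ``the order-type accounting for the jump case'' as the main obstacle, but then you do not overcome it: the set $B$ of elements of $Y$ that drop out of $\str(X_{\not\geq z})$ is not controlled, and the phrases ``the strategy must flip, \dots\ so excluding $z$ and keeping more of $Y$ becomes preferable'' and ``a final case split on the structure of $Y\cap X_{<z}$ \dots\ closes the argument'' are descriptions of a hoped-for argument, not an argument. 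Moreover, recursing into the ambient wpo $X_{\not\geq z}$ changes the meaning of the safety condition itself (it is stated relative to residuals of $X$), which compounds the bookkeeping; and your notation $I(w)$ is never defined.

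The paper avoids all of this by never changing the ambient wpo and by processing elements of $\str(X)$ two at a time in increasing $\ell$-order. Having built a safe $X_{\beta'}\subseteq\ell^{-1}(\downarrow\beta')$, it considers $x=\ell^{-1}(\beta')$ and $x'=\ell^{-1}(\beta'+1)$ and proves that at least one of $X_{\beta'}\cup\{x\}$, $X_{\beta'}\cup\{x'\}$ is still safe: if neither were, there would be witnesses $y,y'\in X_{\beta'}$ such that every element of $\str(X)$ incomparable to $y$ (resp.\ $y'$) dominates $x$ (resp.\ $x'$), and a short case analysis on the comparabilities among $x,x',y,y'$ (using $\ell(y),\ell(y')<\ell(x)<\ell(x')$, hence $y,y'\not\geq x,x'$ and $x\not\geq x'$) yields a contradiction. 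This pigeonhole over consecutive pairs is exactly what delivers $\o(X_{\beta'+2})\geq\delta(\beta')+1$ and hence the $\lfloor n/2\rfloor$ in the finite tail. Without this (or some equally concrete replacement), your successor case does not go through, so the proposal has a genuine gap at the heart of the lemma.
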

\begin{proof}
Let $\downarrow \beta \eqdef \set{\gamma : \gamma<\beta}$ for any ordinal $\beta$.
We prove by induction on $\beta\leq\o(\str(X))$ that there exists $X_{\beta}\subseteq \ell^{-1}(\downarrow\beta)$ a safe subset of $\str(X)$ where $\ell$ restricted to $X_{\beta}$ verifies the safety condition, such that $\o(X_{\beta})\geq \delta(\beta)$. Furthermore, for all $\beta<\beta'\leq\o(\str(X))$, $X_{\beta}\subseteq X_{\beta'}$.

Let $X_0=\emptyset$, and $X_{\beta}=X_{\beta'}$ if $\beta=\beta'+1$ with $\beta$ limit.

If $\beta=\beta'+2$ for some $\beta'$ (it does not have to be limit), then let $x=\ell^{-1}(\beta')$ and $x'=\ell^{-1}(\beta'+1)$.
If $X_{\beta'}\cup \{x\}$ is a safe subset of $\str(X)$ then let $X_{\beta} =X_{\beta'}\cup \{x\}$.
Otherwise it means that there exists $y\in X_{\beta'}$ such that for  any $z\in \str(X)_{\perp y}$, we have $z\geq x$. 
Suppose that $X_{\beta'}\cup \{x'\}$ is not a safe subset either: it would also mean that there exists $y'\in X_{\beta'}$ such that for  any $z\in \str(X)_{\perp y}$, we have $z\geq x'$. 
We know that $x\not\geq x'$ and that $y,y'\not\geq x, x'$, and we can deduce that $y\perp x$ and $y'\perp x'$ and $x\geq y'$. We also deduce that if $x\perp x'$ then $x'\geq y$, otherwise $x\perp y'$. Thus $y\perp y'$, which contradicts the supposition.
Therefore $X_{\beta} =X_{\beta'}\cup \{x'\}$ is safe.
In both cases, $\o(X_{\beta})=\o(X_{\beta'}+1)\geq \delta(\beta')+1 \geq\delta(\beta)$ by induction hypothesis.

If $\beta$ is limit, then let $X_\beta= \cup_{\beta'<\beta} X_{\beta'}$. One can see that $X_\beta$ is safe, and that $\o(X_\beta)\geq \sup_{\beta'<\beta} \delta(\beta')=\beta$ by induction hypothesis.
\end{proof}

Therefore $\delta(\o(\str(X)))\leq \pmot(X)\leq \o(\str(X))$, hence there exists a safe subset whose \mot reaches $\pmot(X)$. We can also observe that $\pmot(X)=\o(\str(X))$ when $\o(\str(X))$ is a limit ordinal, hence this gives us a sufficient condition for when maximal safe order type reaches \mot

Here is concrete, non inductive reformulation of \cref{eq-mulr-width} which reveals how $\w(\Mulr(X))$ is functional in $\pmot(X)$.

\begin{theorem}[Width of the multiset ordering]
\label{thm-w-Mulr}
Let $X$ be a wpo. 

Then $\w(\Mulr(X))=\om^{\pmot(X)}$.
\end{theorem}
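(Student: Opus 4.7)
The plan is to prove both inequalities by transfinite induction, driven by the inductive formula from \cref{thm-width-Mulr}.

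For the upper bound $\w(\Mulr(X)) \leq \om^{\pmot(X)}$, I induct on $\o(X)$. The linear case is immediate: $\str(X) = \emptyset$ gives $\pmot(X) = 0$ and $\w(\Mulr(X)) = 1$. For non-linear $X$, \cref{thm-width-Mulr} combined with \cref{Reso} (so $\o(X_{\not\geq x}) < \o(X)$) and the IH give $\w(\Mulr(X)) = \sup\{\om^{\pmot(X_{\not\geq x}) + 1} : x \in \str(X)\}$, and it suffices to show $\pmot(X_{\not\geq x}) + 1 \leq \pmot(X)$ for every $x \in \str(X)$. I first establish that $\pmot$ is always attained by some safe subset: \cref{pmot-reached} yields attainment when $\o(\str(X))$ is a limit, and for $\o(\str(X)) = \gamma + n$ with $n \geq 1$ a short case analysis (using that ordinal suprema are attained unless they are limit ordinals, together with the lower bound $\delta(\o(\str(X)))$ from \cref{pmot-reached}) covers the remaining cases. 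Then, taking a safe $X' \subseteq X_{\not\geq x}$ attaining $\pmot(X_{\not\geq x})$ and extending its maximal linearisation by placing $x$ on top produces $X' \cup \{x\}$, safe in $X$ with m.o.t.\ $\pmot(X_{\not\geq x}) + 1$. The extension is a maximal linearisation (no $y \in X' \subseteq X_{\not\geq x}$ can exceed $x$), and the safety condition transfers: the case $y = x$ forces $n = 0$ and is satisfied by $x \in \str(X)$, while the case $y \in X'$ reduces — after identifying any occurrence of $x_i = x$ with the built-in constraint $z \not\geq x$ — to the safety of $X'$ inside $X_{\not\geq x}$.

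For the lower bound $\w(\Mulr(X)) \geq \om^{\pmot(X)}$, I prove the stronger statement that $\w(\Mulr(X)) \geq \om^{\o(X')}$ for every safe $X' \subseteq X$, by induction on $\o(X')$. The base $\o(X') = 0$ is trivial. If $\o(X') = \beta + 1$, let $x$ be the $\ell$-maximal element of $X'$; since $\ell$ is a morphism, $X' \setminus \{x\} \subseteq X_{\not\geq x}$ and remains safe in $X_{\not\geq x}$ — appending $x$ to a challenge tuple for the new safety condition (permitted because $\ell(y) < \ell(x)$ for every $y \in X' \setminus \{x\}$) reduces it to the original safety condition for $X'$. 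The IH gives $\w(\Mulr(X_{\not\geq x})) \geq \om^\beta$, and since $x \in \str(X)$ by safety of $X'$ with $n = 0$, \cref{thm-width-Mulr} yields $\w(\Mulr(X)) \geq \om^\beta \cdot \om = \om^{\beta+1}$. If $\o(X') = \alpha$ is a limit, then for each $\beta < \alpha$ the set $\ell^{-1}(\downarrow \beta)$ is safe in $X$ of m.o.t.\ exactly $\beta$ — the upper bound on the m.o.t.\ comes from a concatenation argument showing that a strictly longer linearisation of $\ell^{-1}(\downarrow \beta)$ would splice with $\ell$ on the remainder to contradict maximality. By IH $\w(\Mulr(X)) \geq \om^\beta$ for all $\beta < \alpha$, hence $\geq \om^\alpha$. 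Applying this to every safe subset and invoking continuity of $\om^{(\cdot)}$ gives $\w(\Mulr(X)) \geq \sup_{X' \text{ safe}} \om^{\o(X')} = \om^{\pmot(X)}$.

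The main obstacle is the upper bound's dependence on $\pmot$ being attained: without that, the extension $X' \mapsto X' \cup \{x\}$ would only guarantee $\pmot(X) \geq \pmot(X_{\not\geq x})$ in the limit case, forfeiting the critical ``$+1$'' in the exponent. The remaining ingredients — safety preservation under the extensions and restrictions of safe subsets, and the computation $\o(\ell^{-1}(\downarrow \beta)) = \beta$ — are routine bookkeeping that follows once the maximality of $\ell$ is exploited as an ordinal-theoretic constraint on subset linearisations.
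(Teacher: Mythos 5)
Your proof is correct and follows essentially the same route as the paper's: both directions hinge on \cref{thm-width-Mulr}, with the lower bound obtained by induction on $\o(X')$ over safe subsets $X'$ (restricting $\ell$ to $\ell^{-1}(\downarrow\beta)$), and the upper bound by extending a safe subset of $X_{\not\geq x}$ with $x$ on top to get $\pmot(X_{\not\geq x})+1\leq\pmot(X)$, using the attainment of $\pmot$ guaranteed after \cref{pmot-reached}. The only cosmetic differences are that you induct on $\o(X)$ rather than on $\pmot(X)$ for the upper bound and that you spell out the attainment and safety-transfer steps the paper leaves implicit.
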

\begin{proof}

\begin{description}
\item[$(\geq)$]
Let $X'$ be a safe subset of $X$.
By induction on $\o(X')$:

When $\o(X')=0$, $\w(\Mulr(X))\geq 1$.

Let $\o(X')>0$:
According to \cref{thm-width-Mulr},
\begin{align*}
w(\Mulr(X))&= \sup \setof{\w(\Mulr(X_{\not\geq x}))\cdot \om}{x\in X ,X_{\perp x}\neq \emptyset}\\
&\geq \sup \setof{\w(\Mulr(X_{\not\geq x}))\cdot \om}{x\in X' }\;,
\end{align*}
since for any $x\in X'$, $X_{\perp x}\neq 0$.

$X'$ is safe so there exists a maximal linearisation $\ell:X'\rightarrow \o(X')$ that verifies the safety condition. Let $x=\ell(\beta)$ for some $\beta<\o(X')$. Then $\ell^{-1}(\downarrow\beta)$ is a safe subset of $X'_{\not \geq x}$; the maximal linearisation that verifies the safety condition is $l$ restricted to $\ell^{-1}(\downarrow\beta)$. Hence by induction hypothesis, $\w(\Mulr(X_{\not\geq x}))\geq \om^{\beta}$.
Therefore:
$$w(\Mulr(X))\geq \sup \setof{\om^{\beta}\cdot \om}{\beta<\o(X')}=\om^{\o(X')}\;.$$

\item[$(\leq)$] By induction on $\pmot(X)$. 

If $\pmot(X)=0$ then $X$ is linear hence $\w(\Mulr(X))=1$.

Otherwise let $\alpha=\pmot(X)$.
Assume there exists $x\in X$, with $X_{\perp x}\neq\emptyset$, such that $\pmot(X_{\not\geq x})=\alpha$. It means that there exists a safe subset $X'$ of $X_{\not\geq x}$ and a maximal linearisation $\ell:X'\rightarrow \alpha$ that verifies the safety condition. Let $\ell':X_{\not\geq x}\cup \{x\}\rightarrow \alpha+1$ such that $\ell'(x)\eqdef\alpha$ and $\ell'(y)\eqdef l(y)$ for any $y\in X'$. Then $\ell'$ is a maximal linearisation that verifies the safety condition for the subset $X'\cup \{x\}$ of $X$, which has a \mot of $\alpha+1$. Hence $\pmot(X)\geq\alpha+1$, we reached a contradiction.

For all $x\in X$ such that $X_{\perp x}\neq\emptyset$, we now know that $\pmot(X_{\not\geq x})<\alpha$. Thus $\w(\Mulr(X_{\not \geq x}))\cdot\om \leq \om^{\alpha}$ by induction hypothesis.
Hence, according to \cref{thm-width-Mulr}, $\w(\Mulr(X))\leq \om^{\pmot(X)}$.
\end{description}
\end{proof}

This opens many questions on this new invariant: we already know that it is closely bounded (\cref{pmot-reached}) and reached by the \mot of a safe subset. We also know that the maximal safe order type of a wpo reaches its \mot as soon as $\o(\str(X))=\o(X)$ is a limit ordinal. It is also quite easy to compute compositionally:

\begin{proposition}
For any non empty wpo $A,B$,
\begin{itemize}
\item $\pmot(A+ B)=\pmot(A) +\pmot(B)$,
\item $\pmot(A\sqcup B)=1 + (\o(A)-1) \oplus (\o(B)-1)$,
\item $\pmot(A\times B)\geq (\o(A)-1) \otimes \o(B)$.
\end{itemize}
\end{proposition}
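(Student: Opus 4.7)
My plan is to tackle each of the three items separately, in each case relating safe subsets of the compound wpo to safe subsets of its components and then reading off the ordinal arithmetic.

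For the lexicographic sum $A+B$, the key observation is that every element of $B$ lies above every element of $A$, so $(A+B)_{\perp x} = A_{\perp x}$ for $x \in A$ (and symmetrically for $x \in B$); hence $\str(A+B) = \str(A) + \str(B)$. Any maximal linearisation of a subset $X' \subseteq A+B$ must place $X' \cap A$ entirely before $X' \cap B$, since the underlying order demands it. I would then show that the safety condition decouples: $X'$ is safe in $A+B$ if and only if $X' \cap A$ is safe in $A$ and $X' \cap B$ is safe in $B$. The ``if'' direction uses that a safety witness for a test with $x \in A$ must itself lie in $A$ (elements of $B$ being $\geq x$ cannot be incomparable to $x$), and ``only if'' follows by restricting the witnesses and the linearisation. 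Since $\o(X' \cap A + X' \cap B) = \o(X' \cap A) + \o(X' \cap B)$, taking suprema over safe subsets gives $\pmot(A+B) = \pmot(A) + \pmot(B)$.

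For the disjoint sum $A \sqcup B$, each $A$-element is incomparable to each $B$-element, so $\str(A \sqcup B) = A \sqcup B$. For the lower bound, I would construct a safe subset by keeping almost all of each side, saving one top element of each side as a perpetual incomparable witness, with one further element prepended to the linearisation; this reaches m.o.t.\ $1 + (\o(A)-1) \oplus (\o(B)-1)$, where the Hessenberg sum comes from the disjoint-sum m.o.t.\ formula applied to the two ``reduced'' sides. For the upper bound, I would examine the $\ell$-minimal element $x$ of a candidate safe subset: its safety witness $y$ must be incomparable to $x$ and yet $\not\geq$ any other element of $X'$, which forces at least one omitted element on each side. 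Tracking this obstruction through the Hessenberg arithmetic is precisely where the ``$1+$'' correction appears, and this is the main technical obstacle of the proposition.

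For the Cartesian product $A \times B$, a lower bound only, I would fix a top element $a^* \in A$ in some maximal linearisation of $A$ and set $A' = A \setminus \{a^*\}$ with $\o(A') = \o(A)-1$. The subset $X' = A' \times B \subseteq A \times B$ has m.o.t.\ $(\o(A)-1) \otimes \o(B)$, achieved by combining a maximal linearisation of $A'$ with one of $B$ lexicographically. For safety, given a test with $x = (a,b) \in X'$ and chosen $x_i = (a_i, b_i) \in X'$, I would produce a witness of the form $(a^*, b')$ for a suitable $b'$: since $a^* \not\in A'$, this witness lies in $A \times B$ but outside $X'$, and since $a^*$ is compared to the $A$-coordinates only through $a^* \geq a_i$, the incomparability constraints reduce to choosing $b' \not\geq b_i$ for all $i$ while $b' \perp b$, which is possible when the local residuals are non-empty. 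The disjoint sum's upper bound remains the main subtle step; the lex sum and the product reduce more directly to structural decomposition and explicit construction respectively.
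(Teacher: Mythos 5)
The paper states this proposition without proof, so there is no official argument to compare against; I can only assess your sketch on its own terms. Your treatment of the lexicographic sum is essentially sound: a witness for a test rooted at $x\in A$ must lie in $A$ (every element of $B$ is above $x$), constraints $x_i\in B$ are then vacuous, so the safety condition decouples and $\pmot(A+B)=\pmot(A)+\pmot(B)$ follows from $\o(X'\cap A)+\o(X'\cap B)=\o(X')$.

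The other two items have genuine gaps. First, your reserved witnesses are the wrong elements: you repeatedly save a \emph{top} element, but a top element $a^*$ of $A$ satisfies $a^*\geq a_i$ for exactly the constraints $a_i\in A$ it is supposed to avoid, so it cannot serve as a witness; what works is the \emph{bottom} of a maximal linearisation, which is $\not\geq$ every other element. Removing the bottom is also what realises the left subtraction $\o(A)-1$ in the statement (removing a top can yield an ordinal strictly smaller than $\o(A)-1$ when $\o(A)$ is an infinite successor, and is impossible when it is a limit). Second, your upper bound for $A\sqcup B$ proves too much: ``at least one omitted element on each side'' would give $\pmot(A\sqcup B)\leq(\o(A)-1)\oplus(\o(B)-1)$, which contradicts the claimed lower bound already in the finite case, where exactly one element in total is omitted (for instance $\pmot(2\sqcup 2)=3$, realised by $\{a_0,a_1,b_1\}$ with perpetual witness $b_0$). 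Moreover the safety condition only quantifies over \emph{finite} constraint sets, so for infinite $X'$ there is no single witness below all of $X'\setminus\{x\}$ to reason about, and checking that a shuffle of the two reduced linearisations is still a \emph{maximal} linearisation of the chosen subset is precisely where the ordinal bookkeeping lives. Finally, for the product you should have sanity-checked the inequality before designing a construction: for $A=B=2$ the stripped subset of $2\times 2$ is the two-element antichain $\{(0,1),(1,0)\}$, which is not safe since $((2\times2)_{\not\geq(1,0)})_{\perp(0,1)}=\emptyset$, so $\pmot(2\times 2)=1$, whereas $(\o(A)-1)\otimes\o(B)=2$; consistently, \cref{thm-width-Mulr} gives $\w(\Mulr(2\times2))=\om$. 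The third item therefore fails as stated (it presumably needs a correction such as $(\o(A)-1)\otimes(\o(B)-1)$, or extra hypotheses), and no witness construction can establish it.
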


\section*{Conclusion}

We have computed the width of the finite set of multisets with the multiset embedding. As for the multiset ordering, we have found a simple formula for its height, and showed its width is not functional in any of the usual ordinal invariants. Nonetheless, we introduced a new ordinal invariant, the maximal safe order type, in which the width is functional.

Investigating the maximal safe order type is a subject for further research: Are there equivalent characterisations? How does it relate to other concepts? Can it be computed compositionally ? Can we define a interesting class of wpos where maximal safe order type always coincides with \mot?

\bibliography{biblio}
\appendix
\section{On ordinal arithmetic}

We suppose well-known the notions of sum $\alpha+\beta$ and product $\alpha\cdot\beta$ on ordinals \cite{altman2017}. However, let us recall some definitions that might be less familiar to the reader.

\begin{definition}[Left subtraction]
For any ordinals $\alpha\geq\beta$, the subtraction $\alpha - \beta$ is the unique ordinal such that $\beta + (\alpha - \beta) = \alpha$.
In particular, $$\alpha - 1 \eqdef \begin{cases}
\alpha \text{ if $\alpha$ is infinite,}\\
n - 1 \text{ if } \alpha=n<\om\;.
\end{cases}$$
\end{definition}

\begin{definition}[Cantor normal form]
Any ordinal $\alpha$  can be expressed in Cantor normal form, as $\alpha = \sum_{i< n} \om^{\alpha_i}$, where $\alpha_0,\dots,\alpha_{n-1}$ are ordinals such that $\alpha_0\geq\alpha_1\geq\dots\geq \alpha_{n-1}$. This expression is unique.
\end{definition}
\begin{definition}[Natural operations]
Let $\alpha = \sum_{i< n} \om^{\alpha_i}$ and $\beta = \sum_{i< m} \om^{\beta_i}$ two ordinals in Cantor normal form. 

The \emph{natural sum}, or Hessenberg sum, $\alpha\oplus\beta$ is defined as $\gamma= \sum_{i< n+m} \om^{\gamma_i}$ with $\gamma_0\geq\dots\geq\gamma_{n+m-1}$ a reordering of $\alpha_0,\dots,\alpha_{n-1},\beta_0,\dots,\beta_{m-1}$.

The \emph{natural product} $\alpha\otimes\beta$ is defined as $\bigoplus_{i<n,j<m} \om^{\alpha_i \oplus\beta_j}$.
\end{definition}

Another product was derived from the Hessenberg sum:
\begin{definition}[Hessenberg-based product \cite{abbo99}]
\label{def-odot}
The \emph{Hessenberg-based product} $\alpha\odot\beta$ is defined inductively by the following relations:
\begin{align*}
\alpha \odot 0 &= 0\;,\\
\alpha \odot (\beta+1) &=(\alpha\odot \beta)\oplus\alpha\;,\\
\alpha\odot \beta &= \sup \setof{\alpha\odot\gamma}{\gamma<\beta}\text{ for limit $\beta$.}
\end{align*}
\end{definition}
We always have $\alpha\cdot\beta\leq\alpha\odot\beta\leq\alpha\otimes\beta$.

\begin{definition}
\label{def-epsilon-indecomposable}
An ordinal $\alpha$ is:\begin{itemize}
\item an \emph{$\epsilon$-number} when $\om^{\alpha}=\alpha$.
\item \emph{indecomposable} when for all $\beta,\delta<\alpha$, we have $\beta+\delta<\alpha$. Equivalently, indecomposable ordinals are ordinals of the form $\om^{\beta}$ where $\beta$ is any ordinal. They are sometimes called principal additive ordinals.
\item \emph{multiplicatively indecomposable} when for all $\beta,\delta<\alpha$, we have $\beta\cdot\delta<\alpha$. Equivalently, multiplicatively indecomposable ordinals are ordinals of the form $\om^{\beta}$ where $\beta$ is indecomposable. They are sometimes called principal multiplicative ordinals.
\end{itemize}
\end{definition}
\end{document}